\documentclass[submission,copyright,creativecommons]{eptcs}
\providecommand{\event}{AiML 2026} 

\usepackage{aiml26}

\usepackage{iftex}

\ifpdf
  \usepackage{underscore}         
  \usepackage[T1]{fontenc}        
\else
  \usepackage{breakurl}           
\fi

\title{Taming Complexity in Intuitionistic Modal Logic: The Case of \fik and Its Shallow Calculus}
\author{
Han Gao
\institute{Institute of Computer Science\\
Czech Academy of Sciences}
\email{\quad gao@cs.cas.cz}
\and
Nicola Olivetti
\institute{LIS\\
Aix-Marseille University, CNRS}
\email{\quad nicola.olivetti@lis-lab.fr}
}

\newcommand{\titlerunning}{Taming Complexity in Intuitionistic Modal Logic}
\newcommand{\authorrunning}{H. Gao \& N. Olivetti}

\hypersetup{
  bookmarksnumbered,
  pdftitle    = {\titlerunning},
  pdfauthor   = {\authorrunning},
  pdfsubject  = {EPTCS},               
  pdfkeywords = {keyword1, keyword2} 
}

\usepackage{amsmath, amssymb}
\usepackage{amsfonts}

\usepackage{tikz}
\usepackage{ragged2e}
\usepackage{xhfill}
\usepackage{listings}
\usepackage{virginialake}
\usepackage{dsfont}
\usepackage{rotating}
\usepackage[linesnumbered,ruled,vlined]{algorithm2e}
\usepackage{bussproofs}
\usepackage{adjustbox}
\usepackage{virginialake}
\usepackage{thm-restate}

\usepackage{framed}

\usepackage{makecell}

\usepackage{enumitem}

\newlist{enumeratee}{enumerate}{5}
\setlist[enumeratee]{label*=\arabic*.,leftmargin=*,font=\it}

\def\+#1{\mathcal{#1}}

\newcommand{\iml}{\textsf{IML}\xspace}
\newcommand{\ipc}{\mathsf{IPC}\xspace}
\newcommand{\fik}{\textbf{FIK}\xspace}

\newcommand{\lik}{\textbf{LIK}\xspace}
\newcommand{\ik}{\textbf{IK}\xspace}
\newcommand{\ck}{\textbf{CK}\xspace}
\newcommand{\wk}{\textbf{WK}\xspace}

\newcommand{\mfik}{\mathbf{FIK}\xspace}

\newcommand{\mck}{\mathbf{CK}\xspace}

\newcommand{\basic}{\+C_0\xspace}

\newcommand{\hfik}{\+C_\mathbf{FIK}\xspace}

\newcommand{\remove}[1]{}

\newcommand\mblock[1]{[#1]}

\newcommand{\fc}{\mathsf{fc}}
\newcommand{\bc}{\mathsf{bc}}

\newcommand{\axfont}[1]{\mathsf{#1}}
\newcommand{\ax}[1]{\axfont{k}#1}
\newcommand{\modp}{\axfont{mp}}
\newcommand{\nec}{\axfont{nec}}

\newcommand{\fm}[1]{\mathsf{Fm}{(#1)}}

\newcommand{\md}[1]{\mathsf{md}{(#1)}}
\newcommand{\subfm}[1]{\mathsf{Sub}{(#1)}}

\EnableBpAbbreviations

\newcommand{\cut}{\mathsf{cut}}
\newcommand{\mcut}{\cut^{\mblock{\cdot}}}

\newcommand{\inp}{\mathsf{Input}}
\newcommand{\outp}{\mathsf{Output}}

\newcommand{\lcfik}{$\mathsf{SC}_\mathbf{FIK}$\xspace}
\newcommand{\cumlcfik}{$\mathsf{SC}_\mathbf{FIK}^\times$\xspace}

\newcommand{\mcumlcfik}{\mathsf{SC}_\mathbf{FIK}^\times\xspace}

\newcommand{\upblock}{_{\mblock{\cdot}}}

\newcommand{\wCD}{\mathsf{wCD}}

\newcommand{\At}{\mathsf{At}}
\newcommand{\FS}{\mathsf{FS}}

\def\lef#1{#1^\bullet}
\def\rig#1{#1^\circ}
\def\dow#1{{#1}^\ast}

\newcommand{\id}{\mathsf{id}}

\begin{document}

\maketitle

\begin{abstract}
Intuitionistic modal logics (\iml) comprise many systems: from constructive modal logics such as \ck and Wijesekera's \wk to Fischer Servi/Simpson's \ik, as well as some recently introduced variants. All of them are characterized by bi-relational semantics and have complete axiomatisations. However, from the perspective of proof theory and complexity, there are strong differences: while for constructive modal logics simple Gentzen calculi suffice, for \ik more complex calculi, based on nested or labeled sequents, are needed. As a consequence, the decision problem for constructive modal logics has a \textsc{Pspace} upper bound, whereas for \ik is not known and it is even conjectured to be non-elementary.
We study here the proof theory and complexity of \fik, a natural intuitionistic modal logic recently introduced. 
\fik is strictly in between \wk and \ik, yet it has the same forcing conditions as \ik. We define a ``shallow'' sequent calculus for \fik which is a nested sequent calculus where sequents have at most one level of nesting. We prove its syntactic completeness by showing the admissibility of cut. By means of this calculus we show that decision problem for \fik is in \textsc{Expspace}, whence significantly lower than the complexity conjectured for \ik.
\end{abstract}

\section{Introduction}\label{sec:intro}

Intuitionistic modal logics (\iml for short) have been studied since 
the late 40's~\cite{Fitch:1948} 
and comprise  
a variety of systems. It is usual to divide them  into two main subfamilies. In the first family, sometimes referred as \textit{intuitionistic modal logic} in \emph{strictu sensu}, the basic system is \ik, this logic was introduced by Fischer Servi \cite{Fischer:Servi:1977,Fischer:Servi:1978,Fischer:Servi:1984} and later systematized by Simpson \cite{simpson1994proof};  \ik is specifically  motivated by its  meta-theoretical properties, as it has a standard translation into first-order intuitionistic logic (FOIL).
The other family is called \textit{constructive modal logic}. It was pioneered by Prawitz \cite{Prawitz:1965} and then developed by Wijesekera who proposed the first order system \textit{constructive concurrent dynamic logic} \cite{Wijesekera:1990}, whose propositional fragment is now usually named \wk. 
The most known representative of this family is \ck as `constructive K' proposed in \cite{Bierman:dePaiva:2000}, namely a slight weaker system than \wk. 
The main motivation for constructive modal logics lies in their applications in computer science and proof theory, first of all their computational interpretation in terms of types and the Curry-Howard correspondence, but also for representing contextual reasoning~\cite{mendler2005constructive,Bellin:et:al:2001}. All the mentioned logics are nonetheless meant to be the intuitionistic counterpart of 
the classical K. 

However, this overview is not exhaustive; 
the landscape of \iml is  richer than one might expect. In recent years other systems have been introduced  with their own motivations. 
We consider in this paper one of the recent variants, \textit{forward-confluent} intuitionistic modal logic \fik, proposed in \cite{csl-2024-fik}. This logic enjoys a natural bi-relational semantics and it is the weakest logic defined by Fischer Servi (or Simpson's) forcing conditions for \emph{both} modalities. Moreover, \fik satisfies all the criteria stated by Simpson in \cite{simpson1994proof}, except for its interpretability in FOIL; for this reason it seems a meaningful system in the realm of IML on its own. Notice that the logic \fik is strictly stronger than constructive logics \ck and \wk and strictly weaker than \ik, these (strict) inclusions hold notably already for the $\Diamond$-free fragment of the respective logics. 
Other logics have been recently investigated, as we cannot be exhaustive here we just mention the logic \lik of \emph{local} modalities \cite{balbiani2024local} which is stronger than \fik  and incomparable with \ik; and the `minimal' logic in \cite{balbiani2026minimal-iml}
with a `dual' forcing condition for $\Diamond$ which that makes it incomparable with  \wk. Furthermore, a systematic investigation of other variants between \ck and \ik is carried on recently in \cite{degroot-2025semantical} where, among others, an exact correspondence between each axiom of \ik and independent frame conditions is established. 

From a semantic perspective, \emph{all} the mentioned systems are characterized uniformly by the bi-relational semantics with suitable forcing and frame conditions; in contrast, their proof theory, whenever it exists, is far from 
uniformity. 
The  constructive modal logics \ck and \wk enjoy cut-free simple Gentzen calculi, analogous to basic calculi for classical modal logic known at least since Fitting's seminal work \cite{fitting2013proof}. 
On the other hand, for \ik no such simple calculi are available and it seems unlikely that they indeed exist. To the best of our knowledge, existing sequent calculi for \ik all employ sequent with enriched structure, such as nested calculus \cite{2013cut,Galmiche:Salhi:2010,kuznets2019maehara}, fully labelled calculus \cite{Marin:et:al:2021,Girlando:2024:wollic-ik} and a more recent bi-nested calculus \cite{gao:2025-ik}. 
From a computational viewpoint the difference is even more striking: by virtue of their simple calculi, it is folklore that constructive modal logic \ck and \wk are decidable in polynomial space; In contrast, according to e.g. \cite{odintsov2008inconsistency}, the decision problem for \ik might not have an \emph{elementary} upper bound. 
It is then natural to 
ask 
where \fik stands from a complexity perspective, since it has the same forcing conditions of \ik: 
is it a neighbor of constructive modal logics or on the contrary, a neighbor of \ik? 
In \cite{csl-2024-fik} a proof system  for \fik is proposed in the form of a `bi-nested' calculus; although this calculus provides a decision procedure for \fik as well as a finite counter-model construction, it is not obvious how to obtain an upper bound of decidability by means of the calculus. In \cite{balbiani2026-fik-decidability}, alternative proofs of decidability of \fik are provided, but the complexity of the decision problem remains open. 

In this paper we tackle the problem of the complexity of \fik. By means of a new calculus, we show that the decision problem of the logic is in \textsc{Expspace}. This upper bound, although higher than that of constructive modal logics, is    significantly lower than what is conjectured for \ik, as it is elementary. In our opinion this result provides a further argument in favour of \fik as an interesting member of the \iml family. 
The question whether this bound is strict, that is of the lower bound for decidability of \fik, is open at present, as it is for \ik. 
Our result is obtained through proof-theoretic methods. We introduce a new calculus for \fik, 
which makes use of structured sequents that are only slightly more complex than standard Gentzen sequents, yet notably simpler than (bi-)nested sequents. 
A structured sequent is a Gentzen sequent possibly coupled with a tuple of `blocks' containing Gentzen sequents. As a difference with (bi)-nested calculi, structured sequents have therefore at most two-levels, no further nesting needed. The rules operate either on the `top' level or within a block. 
For this reason, we qualify this calculus as `shallow', following the terminology of \cite{gore2011correspondence}.  
Semantically, a `shallow' sequent can be viewed as a partial view  of a (bi-)nested sequent, representing only the current world and its direct accessible successors. 
We establish the completeness of the shallow calculus syntactically via a proof of cut-admissibility. 
Beyond the current results, we believe it is worthwhile to study structured calculi of this type, whose structure lies halfway between simple Gentzen calculi and full (bi-)nested calculi, for other intuitionistic modal logics with the aim of establishing complexity bounds and other properties. 

\section{Syntax and semantics}\label{sec:logics}

In this section we introduce basics of \fik, defining the syntax, the bi-relational semantics as well as its Hilbert axiom system \cite{csl-2024-fik}. 

\begin{definition}[Formulas]
	Let $\At$ be a countable set of propositional variables denoted as $p, q$, etc. The set of formulas $\+L$ whose elements are denoted $A, B$, etc., is 
	generated by the grammar 
	$$
	A::=~p~|~\bot~|~(A\wedge A)~|~(A\vee A)~|~(A\supset A)~|~\Box A~|~\Diamond A
	$$
	where $p$ ranges over $\At$. Negation $\neg A$ is defined as $A\supset\bot$.
\end{definition}

For a formula $A$, the \emph{size} of it denoted $|A|$ is the number of symbols in $A$. The \emph{modal degree} of $A$ denoted $\md{A}$ is defined as usual. 

Semantics of the logic \fik is defined in terms of bi-relational Kripke models\cite{csl-2024-fik}. 

\begin{definition}[Bi-relational frames and models]
	A \emph{bi-relational frame} is a triple $(W,\leq,R)$ where $W$ is a nonempty set of {\em worlds}, $\leq$ is a pre-order on $W$ and $R$ is a binary relation on $W$. 
	A \emph{bi-relational model} is a quadruple $(W,\leq,R,V)$, where $(W,\leq,R)$ is a bi-relational frame and $V: \At\to\+P(W)$ is a valuation function such that for all $p\in\At$, $V(p)$ is upper closed with respect to $\leq$. 

    We postulate 
    the following frame condition called forward confluence $(\fc)$ (cf. Figure \ref{fig:fc-bc} (left)): 
	\begin{center}
		$(\fc)$ \quad $\forall x, x'\!, z\in W$, if $x\leq x'$\! and $Rxz$, there is $z'\!\in W$ s.t. $Rx'z'$\! and $z\leq z'$.
	\end{center}
	A $\mfik$-frame (abbreviated as frame) is a bi-relational frame satisfying  $(\fc)$ and a $\mfik$-model (abbreviated as model) is a bi-relational model based on a $\mfik$-frame. 
\end{definition}

\begin{figure}[!t]
	\begin{center}
    \begin{tikzpicture}
    \node(x)[circle,draw,inner sep=0pt,minimum size=1mm,fill=black][label=left:$x$] at (1,2.5){};
    \node(z)[circle,draw,inner sep=0pt,minimum size=1mm,fill=black][label=right:$z$] at (3,2.5){};
    \node(y)[circle,draw,inner sep=0pt,minimum size=1mm,fill=black][label=left:$x'$] at (1,4.5){};
    \node(1)[][] at (2,3.5){$\fc$};
    \node(u)[circle,draw,inner sep=0pt,minimum size=1mm,fill=black][label=right:$z'$] at (3,4.5){};

    \draw[-stealth] (x) edge node[below] {$R$} (z);
    \draw[-stealth] (x) edge node[left] {$\leq$} (y);

    \draw[dashed, -stealth] (y) edge node[above] {$R$} (u);
    \draw[dashed, -stealth] (z) edge node[right] {$\leq$} (u);

    \node(x3)[circle,draw,inner sep=0pt,minimum size=1mm,fill=black][label=left:$x$] at (6,2.5){};
    \node(z3)[circle,draw,inner sep=0pt,minimum size=1mm,fill=black][label=right:$z$] at (8,2.5){};
    \node(u3)[circle,draw,inner sep=0pt,minimum size=1mm,fill=black][label=right:$z'$] at (8,4.5){};
    \node(3)[][] at (7,3.5){$\bc$};

    \node(y3)[circle,draw,inner sep=0pt,minimum size=1mm,fill=black][label=left:$x'$] at (6,4.5){};
    \draw[dashed, -stealth] (y3) edge node[above] {$R$} (u3);
    \draw[dashed, -stealth] (x3) edge node[left] {$\leq$} (y3);

    \draw[-stealth] (x3) edge node[below] {$R$} (z3);
    \draw[-stealth] (z3) edge node[right] {$\leq$} (u3);
    \end{tikzpicture}
	\end{center}
    \caption{Forward confluence (left) and backward confluence (right)}
	\label{fig:fc-bc}
\end{figure}

Next we define the forcing conditions for the formulas in a \fik-model. 

\begin{definition}[Forcing conditions for \fik]\label{definition:forcing:relation-prop}
    Let $\+M =(W,\leq,R,V)$ be a \fik-model. The forcing conditions of a formula at a world $w\in W$ of $\+M$ are defined as follows:
	\begin{center}
    \begin{tabular}{l l p{.7\textwidth}}
	    $\+M,w\nVdash\bot$ & & \\
        $\+M,w\Vdash p$ &iff~~& $w\in V(p)$\\
        $\+M,w\Vdash B\wedge C$ &iff~~& $\+M,w\Vdash B$ and $\+M,w\Vdash C$\\
        $\+M,w\Vdash B\vee C$ &iff~~& $\+M,w\Vdash B$ or $\+M,w\Vdash C$\\
        $\+M,w\Vdash B\supset C$ &iff~~&for all $w'\in W$ with $w\leq w'$, if $\+M,w'\Vdash B$, then $\+M,w'\Vdash C$\\
        $\+M,w\Vdash \square B$ &iff~~& for all $w',v\in W$ with $w\leq w'$ and 
        $Rw'v$, it holds $\+M,v\Vdash B$\\
        $\+M,w\Vdash \Diamond B$ &iff~~& there is $v\in W$ s.t. $Rwv$ and $\+M,v\Vdash B$\\
    \end{tabular}
	\end{center}
    We shall abbreviate $\+M,w\Vdash A$ as $w\Vdash A$ if the model is clear from the context. 
    A formula $A$ in $\+L$ is \textit{valid}, denoted $\Vdash A$, if for any bi-relational model $\mathcal{M}$ and any world $w$ in it, it holds that $\mathcal{M},w\Vdash A$.
\end{definition}
We observe that the forward confluence condition is needed to ensure the hereditary property for $\Diamond$-formulas. 
\footnote{By hereditary property, we mean the following: for every formula $A\in\+L$ if $w\Vdash A$ and $w\leq w'$ then $w'\Vdash A$. This property is required to extend the validity of intuitionistic axioms to the whole modal language $\+L$. }
To put things in a context, the logic \fik is  related to other well-known systems in the \iml family. The logic \ik is characterized by models based on frames satisfying \emph{both} forward confluence and the following condition  
\emph{backward confluence} ($\bc$) (cf. Figure \ref{fig:fc-bc} (right)):
\begin{center}
	$(\bc)$ \quad $\forall x, z, z'\in W$, if $Rxz$ and $z\leq z'$, there is $x'\!\in W$ s.t. $Rx'z'$\! and $x\leq x'$.
\end{center}
The forcing conditions of \ik are defined in the same way as in  \fik, 
which go back to Fischer Servi \cite{Fischer:Servi:1984}. On the other hand, in bi-relational models of \ck (as well as those of \wk) no additional property of frames is assumed. The forcing condition of $\Box$ is the same as in \fik, whereas for $\Diamond$: 
\begin{quote}
	$x\Vdash_{\mck} \lozenge A$ \quad iff \quad $\forall x'\in W$ with $x\leq x'$, there is $y\in W$ s.t. $Rx'y \ \& \ y\Vdash_\mck A$.
\end{quote}
This condition is equivalent to the \textit{local} one of  \fik with respect to $\mfik$-models. 

We now recall the Hilbert-style axiom system for \fik introduced in \cite{csl-2024-fik}. 

\begin{definition}[Axiom systems]
	The Hilbert-style axiom system for $\mfik$ called $\hfik$ is defined as $\hfik=\ipc\oplus\basic$ 
	where $\ipc$ denotes an axiomatization of intuitionistic propositional logic (IPL) and $\basic$ contains axioms and rules in Figure \ref{fig:axioms-basic}. 
\end{definition}

\begin{figure}[!t]
	\centering
	\begin{minipage}{.5\textwidth}
		\centering
	  	\small
		  \begin{tabular}{ccl}
			$\ax{1}$ & \quad & $\square(A\supset B)\supset(\square A\supset\square B)$ \\
			$\ax{2}$ & \quad &  $\square(A\supset B)\supset(\Diamond A\supset\Diamond B)$\\
			$\ax{3}$ & \quad & $\Diamond(A\vee B)\supset (\Diamond A\vee\Diamond B)$ \\
			$\ax{5}$ & \quad & $\neg\Diamond\bot$\\
            $\wCD$ & \quad & $\square(A\vee B)\supset((\Diamond A\supset\square B)\supset\square B)$\\
			&&\\
			\multicolumn{3}{c}{
			  \AxiomC{$A\supset B$}
			  \AxiomC{$A$}
			  \RightLabel{$\modp$}
			  \BinaryInfC{$B$}
			  \DisplayProof
				\qquad
				\AxiomC{$A$}
				\RightLabel{$\nec$}
				\UnaryInfC{$\Box A$}
				\DisplayProof
			}
			\end{tabular}
		\normalsize
		\caption{Axiom and rules of $\basic$}
		\label{fig:axioms-basic}
	\end{minipage}%
	\begin{minipage}{.5\textwidth}
		\centering
	  	\small
		\scalebox{.8}{
		\begin{tikzpicture} 
			\node(1)[][] at (3,1.5){\bf CK};
			\node(2)[][] at (3,3){\bf WK};
			\node(4)[][] at (3,4.5){\textcolor{black}{\bf FIK}};
			\node(5)[][] at (3,6){\bf IK};
			  
			\draw[->] (1) edge (2);
			\draw[->] (2) edge (4);
			\draw[->] (4) edge (5);

		\end{tikzpicture}
		}
		\normalsize
		\caption{Hierarchy of some \iml systems}
		\label{fig:Hierarchy-iml}
	\end{minipage}
\end{figure}

The notions of proof and derivability in the axiom system are defined as usual. 
The Hilbert system $\hfik$ is shown sound and complete with respect to the bi-relational semantics. 

\begin{theorem}[\cite{csl-2024-fik}]
	A formula $A\in\+L$ is $\mfik$-valid if and only if it is provable in $\hfik$. 
\end{theorem}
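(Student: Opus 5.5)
The plan is the standard soundness-plus-canonical-model argument; the real work lies in the canonical model for the diamond.

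\emph{Soundness.} I would show by induction on the length of a $\hfik$-derivation that every derivable formula is valid in every $\mfik$-model. The preliminary step is the hereditary property: if $w\Vdash A$ and $w\leq w'$ then $w'\Vdash A$. This goes by induction on $A$.
\begin{itemize}
\item Atoms are upward closed by the definition of $V$.
\item For $\supset$ and $\Box$, persistence is built into the forcing clause via transitivity of $\leq$.
\item For $\Diamond B$, suppose $Rwv$ and $v\Vdash B$. Applying $(\fc)$ to $w\leq w'$ and $Rwv$ gives $v'$ with $Rw'v'$ and $v\leq v'$. The induction hypothesis then yields $v'\Vdash B$.
\end{itemize}
With heredity in place, the $\ipc$ axioms are valid, and $\modp$ and $\nec$ preserve validity. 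The axioms $\ax{1}$, $\ax{2}$, $\ax{3}$ and $\ax{5}$ are checked directly from the clauses. For $\wCD$, fix $w\leq w_1\leq w_2$ with $Rw_2 v$, where $w_1\Vdash\Box(A\vee B)$ and $w_1\Vdash\Diamond A\supset\Box B$. Then $v\Vdash A\vee B$. If $v\Vdash A$, then $w_2\Vdash\Diamond A$, hence $w_2\Vdash \Box B$, hence $v\Vdash B$.

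\emph{Completeness.} I would build a canonical model and argue contrapositively. Worlds are pairs $(\Gamma,\Delta)$ of sets of formulas in which $\Gamma$ is a prime theory of $\hfik$ and $\Delta$ is a set of $\Diamond$-free-compatible data used to witness $\Diamond$. More simply, I would first try worlds as segments $(\Gamma,\mathcal{D})$, where $\Gamma$ is prime and $\mathcal{D}$ is a set of prime theories representing $R$-successors. The relations are set as follows:
\begin{itemize}
\item $(\Gamma,\mathcal D)\leq(\Gamma',\mathcal D')$ iff $\Gamma\subseteq\Gamma'$ and every $\Sigma\in\mathcal D$ has some $\Sigma'\in\mathcal D'$ with $\Sigma\subseteq\Sigma'$. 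This makes $(\fc)$ hold by construction.
\item $R$ holds iff the target's first component lies in $\mathcal D$.
\end{itemize}
Coherence requires two conditions on a segment: $\Box A\in\Gamma$ implies $A\in\Sigma$ for all $\Sigma\in\mathcal D$, and $\Diamond A\in\Gamma$ iff $A\in\Sigma$ for some $\Sigma\in\mathcal D$.

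\emph{Truth lemma.} The next step is the truth lemma, $(\Gamma,\mathcal D)\Vdash A$ iff $A\in\Gamma$, by induction on $A$. The propositional and $\supset$ cases use the prime-extension (Lindenbaum) lemma. The $\Diamond$ case uses the coherence condition directly. The $\Box$ case is the crux. Given $\Box A\notin\Gamma$, I must produce an extension $(\Gamma',\mathcal D')\geq(\Gamma,\mathcal D)$ together with a successor not containing $A$. This requires extending $\Gamma$ to $\Gamma'$ while keeping $\Box A\notin\Gamma'$, and adding a new successor $\Sigma\supseteq\{B:\Box B\in\Gamma'\}$ with $A\notin\Sigma$. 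At the same time, for each old $\Sigma_0\in\mathcal D$ I need some extension inside $\mathcal D'$, and the new $\Sigma$ must not contain any $C$ with $\Diamond C\notin\Gamma'$.

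\emph{Main obstacle.} The hardest step is exactly this $\Box$-witness construction, and the role of $\wCD$ is to make it consistent. I would first choose $\Gamma'$ as a prime extension of $\Gamma$ avoiding $\Box A$. Then I would show that $\{B:\Box B\in\Gamma'\}$ has a prime extension $\Sigma$ avoiding both $A$ and every $C$ with $\Diamond C\notin\Gamma'$. Suppose this fails. Then $\{B:\Box B\in\Gamma'\}$ entails $A\vee C_1\vee\dots\vee C_n$. By $\ax{1}$ this gives $\Box(A\vee C)$, and by $\ax{3}$ we have $\Diamond C\notin\Gamma'$. To reach a contradiction I need $\Diamond C\supset\Box A\in\Gamma'$. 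If that is not available, I would refine the choice of $\Gamma'$ so that it is maximal with respect to avoiding $\Box A$; primeness and $\supset$-introduction then give $\Diamond C\supset\Box A\in\Gamma'$ or $\Diamond C\in\Gamma'$. Applying $\wCD$ yields $\Box A\in\Gamma'$, a contradiction. The old successors are then extended via $\ax{2}$ and $\ax{5}$ to match $\Gamma'$.
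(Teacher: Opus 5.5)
The paper does not prove this theorem; it imports it from the earlier CSL paper on \fik, so there is no in-text argument to compare yours with. Your outline is the natural route: soundness via heredity, with $(\fc)$ handling $\Diamond$, and completeness via a canonical model of segments $(\Gamma,\mathcal D)$ with $\le$ and $R$ as you define them. As far as I know it also matches the cited source in architecture.

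The decisive $\Box$ step is correct in its final form, but present it as the construction rather than as a fallback. An arbitrary prime extension of $\Gamma$, including $\Gamma$ itself, genuinely fails: intuitionistically, $\Diamond C\notin\Gamma'$ and $\Box A\notin\Gamma'$ do not give $\Diamond C\supset\Box A\in\Gamma'$. So take $\Gamma'$ maximal among theories containing $\Gamma$ and omitting $\Box A$. For every $D\notin\Gamma'$ one then gets $D\supset\Box A\in\Gamma'$; this follows from maximality plus the deduction theorem, not from primeness. The same fact shows $\Gamma'$ is automatically prime: if $D_1\vee D_2\in\Gamma'$ with $D_1,D_2\notin\Gamma'$, then $\Box A\in\Gamma'$. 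From there $\wCD$ gives the contradiction exactly as you say.

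Two routine lemmas are missing, and the model does not work without them.
\begin{itemize}
\item \emph{Every prime theory is the first component of some segment.} For each $\Diamond C\in\Sigma$, extend $\{B:\Box B\in\Sigma\}\cup\{C\}$ to a prime theory avoiding $\{D:\Diamond D\notin\Sigma\}$, using $\ax{1}$, $\ax{2}$, $\ax{3}$, $\ax{5}$. Without this, $\Sigma\in\mathcal D$ does not yield an $R$-successor. Three steps then stall: the $\Diamond$ case of the truth lemma (it does not follow from coherence directly), making your new witness $\Sigma$ in the $\Box$ case a world, and the final countermodel.
\item \emph{Extension lemma.} If $(\Gamma,\mathcal D)$ is a segment and $\Gamma'\supseteq\Gamma$ is prime, then there is $\mathcal D'$ with $(\Gamma,\mathcal D)\le(\Gamma',\mathcal D')$. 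For $\Pi\in\mathcal D$, use $\Diamond P\in\Gamma\subseteq\Gamma'$ for $P\in\Pi$ together with $\ax{2}$. You also need $\ax{3}$ with primeness, which makes $\{C:\Diamond C\notin\Gamma'\}$ closed under finite disjunction, and $\ax{5}$ for the empty disjunction. In addition, $\mathcal D'$ must contain witnesses for the $\Diamond$-formulas that are new in $\Gamma'$.
\end{itemize}

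The extension lemma is what $(\fc)$ actually rests on. Your definition of $\le$ only handles the first level: the $R$-successor $(\Sigma,\mathcal E)$ must itself be extended to a segment $(\Sigma',\mathcal E')$. So $(\fc)$ is not literally ``by construction''. The same lemma is needed in the $\supset$ case, where Lindenbaum alone only supplies $\Gamma'$.

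Finally, drop the opening sentence about pairs $(\Gamma,\Delta)$ with ``$\Diamond$-free-compatible data''; it plays no role.
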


It is instructive to briefly recall the axiomatization of a few logics mentioned in the introduction. The logic \wk (\cite{Wijesekera:1990}) is obtained by removing $\ax{3}$ and $(\wCD)$
\footnote{The name $(\wCD)$ can be spelled as \emph{weak constant domain}, see \cite{csl-2024-fik} for a justification. } 
from $\hfik$. The logic \ck \cite{Bierman:dePaiva:2000} is obtained by further removing $\ax{5}$. 
\footnote{
There are of course other variants not included in this graph, 
e.g. a calculus for \fik excluding $(\wCD)$ is given in \cite{anupam:2023} and its semantic characterization as well as other systems like $\mck\oplus\ax{5}$ and their relations are studied in \cite{degroot-2025semantical}. } 
In turn, the axiomatization of  \ik is obtained from $\hfik$ by replacing $(\wCD)$ with the following axiom denoted  $\ax{4}$ or $(\FS)$ (for Fischer Servi):
$$(\FS/\ax{4}) \ (\Diamond A\supset \Box B)\supset \square (A \supset B)$$
Notice that $(\wCD)$ is derivable in the axiom system of  \ik \cite{csl-2024-fik}. 
The diagram in Figure \ref{fig:Hierarchy-iml} illustrates the strict inclusion relations among the mention systems. 

We end this section with some discussions on the legitimacy of \fik in the \iml family. 
Simpson \cite{simpson1994proof} proposed some criteria that a reasonable \iml should satisfy: 
(i) it is a conservative extension of IPL, (ii) it contains all instances of IPL axioms and is closed with respect to the rule $(\modp)$, (iii) $\Box$ and $\Diamond$ are not inter-definable, (iv) it satisfies disjunction property, (v) adding classical  \textit{excluded middle law}, it collapses into its classical counterpart (say K for the basic case). Lastly the (vi)$^\mathsf{th}$ criterion requires that axioms of the logic should be validated by the standard translation of modal logic into first-order intuitionistic logic (FOIL). 
As   shown in \cite{csl-2024-fik}, \fik satisfies all criteria (i)-(v), but not the sixth one. 
However in our opinion the last criterion is controversial as  the standard translation in FOIL might not be considered in itself as a main justification of a logic (alternative semantics rather than possible world relational semantics are also possible); in addition, alternative translations might be possible for \iml's which do not satisfy this criterion. Thus in this sense the absence of it should not be seen as a lack of justification. 

\section{A shallow calculus for \fik}\label{sec:shallow-calculus-fik}

In this section, we define a new calculus for \fik called a \emph{shallow} sequent calculus. The calculus makes use of nested sequents, generated by structural operators called `blocks', but as a difference with nested calculi like \cite{brunner:2009,2013cut,Galmiche:Salhi:2010,kuznets2019maehara,Fitting:2014}, there is only one level of nesting. 
This means that 
the content of a block is just an ordinary Gentzen sequent, and no further nesting is allowed. 
Consequently, most of the rules of the calculus operate on both levels: either on the `surface' or within the blocks. As in \cite{2013cut,kuznets2019maehara}, we adopt a `polarised' annotation of formulas distinguishing the `input' formulas and the unique `output' formula. Intuitively the former play the role of the antecedent of a sequent and the latter its unique consequent. 

\begin{definition}[Sequents]\label{def:shallow-sequents}
    Sequents are built out of annotated formulas in the form of $\lef A$ or $\rig A$, where a formula annotated by $\bullet$ is called an \emph{input formulas} and a formula annotated by $\circ$ is called an \emph{output formula}. 
    A \emph{simple sequent} $S$ is of the form $\Phi, \rig B$ 
    where $\Phi$ is a multi-set of $\bullet$-formulas $\lef {A_1},\ldots,\lef {A_n}$ and $B$ is a formula. 
    A \emph{shallow sequent} (or a \emph{sequent} for short) is of the form 
    $$
    \Phi,[\Psi_1],\ldots,[\Psi_l]
    $$
    where $l \geq 0$ and exactly one of $\Phi,\Psi_1,\ldots,\Psi_l$ is a simple sequent whereas the others are multi-set of $\bullet$-formulas. When $l = 0$ a sequent is a simple sequent. 
    We call $\Phi$ the top-level component and each nested component $[\Psi_i]$ a block. 
    The part of sequent 
    without the output formula is called an \emph{input} sequent. 
    We shall generically use $\Omega,\Phi,\Psi,\ldots$ to denote sequents and input sequents. 
\end{definition}

We can view shallow sequents as a simplification of nested sequents. The idea of retaining only the necessary structure from nested sequents is also present in other formalisms, such as linear nested sequents, but the simplification is orthogonal: linear nested sequents keep only a single path of worlds, whereas shallow sequents keep only the immediate successors of the current world. 

In order to define the calculus we need several operators that remove or add an output formula, transforming a sequent into an input sequent and vice versa. 

\begin{definition}
The $*$-operator is defined as follows: for a sequent $\Omega$, we denote $\dow\Omega$ as the result of removing the output formula of $\Omega$ wherever it occurs. 
The $+$-operator is defined as follows: let $\Phi$ be a simple sequent, if $\Phi$ does not contain the output formula, then $\Phi^+:=\Phi,\rig \bot$; otherwise $\Phi^+:=\Phi$. Finally we denote the formula part of $\Omega$ by $\fm{\Omega}$, obtained by removing the blocks of $\Omega$ if any. 
\end{definition}


\begin{figure}[!t]
    \begin{framed}
    \begin{center}
        \begin{adjustbox}{max width = \textwidth}
            \begin{tabular}{cc}
                \AxiomC{}
                \RightLabel{($\lef \bot_1$)}
                \UnaryInfC{$\Omega,\lef\bot$}
                \DisplayProof
                \quad
                \AxiomC{}
                \RightLabel{($\lef \bot_2$)}
                \UnaryInfC{$\Omega,\mblock{\Phi,\lef\bot}$}
                \DisplayProof
                &
                \AxiomC{}
                \RightLabel{($\id_1$)}
                \UnaryInfC{$\Omega,\lef p,\rig p$}
                \DP
                \quad
                \AxiomC{}
                \RightLabel{($\id_2$)}
                \UnaryInfC{$\Omega,\mblock{\Phi,\lef p,\rig p}$}
                \DP
                \\[.4cm]
                \AxiomC{$\Omega,\lef A,\lef B$}
                \RightLabel{($\lef \wedge_1$)}
                \UnaryInfC{$\Omega,\lef {A\wedge B}$}
                \DisplayProof
                \quad
                \AxiomC{$\Omega,\mblock{\Phi,\lef A,\lef B}$}
                \RightLabel{($\lef \wedge_2$)}
                \UnaryInfC{$\Omega,\mblock{\Phi,\lef {A\wedge B}}$}
                \DisplayProof
                &
                \AxiomC{$\Omega,\rig {A_i}$}
                \RightLabel{($\rig \vee_1$)}
                \UnaryInfC{$\Omega,\rig{A_1\vee A_2}$}
                \DisplayProof
                \quad
                \AxiomC{$\Omega,\mblock{\Phi,\rig {A_i}}$}
                \RightLabel{\rm ($\rig \vee_2$)}
                \UnaryInfC{$\Omega,\mblock{\Phi,\rig{A_1\vee A_2}}$}
                \DisplayProof
                \\[.4cm]
                \AxiomC{$\Omega,\rig A$}
                \AxiomC{$\Omega,\rig B$}
                \RightLabel{($\rig \wedge_1$)}
                \BinaryInfC{$\Omega,\rig{A\wedge B}$}
                \DisplayProof
                &
                \AxiomC{$\Omega,\mblock{\Phi,\rig A}$}
                \AxiomC{$\Omega,\mblock{\Phi,\rig B}$}
                \RightLabel{($\rig \wedge_2$)}
                \BinaryInfC{$\Omega,\mblock{\Phi,\rig{A\wedge B}}$}
                \DisplayProof
                \\[.4cm]
                \AxiomC{$\Omega,\lef A$}
                \AxiomC{$\Omega,\lef B$}
                \RightLabel{($\lef \vee_1$)}
                \BinaryInfC{$\Omega,\lef{A\vee B}$}
                \DisplayProof
                &
                \AxiomC{$\Omega,\mblock{\Phi,\lef A}$}
                \AxiomC{$\Omega,\mblock{\Phi,\lef B}$}
                \RightLabel{($\lef \vee_2$)}
                \BinaryInfC{$\Omega,\mblock{\Phi,\lef{A\vee B}}$}
                \DisplayProof
                \\[.4cm]
                \AxiomC{$\Omega^*,\lef{A\supset B},\rig A$}
                \AxiomC{$\Omega,\lef B$}
                \RightLabel{($\lef \supset_1$)}
                \BinaryInfC{$\Omega, \lef{A\supset B}$}
                \DisplayProof
                &
                \AxiomC{$\Omega^*,\mblock{\dow\Phi,\lef{A\supset B},\rig A}$}
                \AxiomC{$\Omega,\mblock{\Phi,\lef B}$}
                \RightLabel{($\lef \supset_2$)}
                \BinaryInfC{$\Omega,\mblock{\Phi, \lef{A\supset B}}$}
                \DisplayProof
                \\[.4cm]
                \AXC{$\Omega,\lef A,\rig B$}
                \RightLabel{$(\rig \supset_1)$}
                \UIC{$\Omega,\rig{A\supset B}$}
                \DP
                &
                \AXC{$\Phi,\lef A,\rig B$}
                \RightLabel{$(\rig \supset_2)$}
                \UIC{$\Omega,\mblock{\Phi,\rig{A\supset B}}$}
                \DP
                \\[.4cm]
                \AxiomC{$\Omega,\lef{\square A},\mblock{\Phi,\lef A}$}
                \RightLabel{($\lef \square$)}
                \UnaryInfC{$\Omega, \lef{\square A},\mblock{\Phi}$}
                \DisplayProof
                &
                \AXC{$\Omega,\mblock{\rig A}$}
                \RightLabel{$(\rig \square_1)$}
                \UIC{$\Omega,\rig{\square A}$}
                \DP
                \quad
                \AXC{$\Phi,\mblock{\rig A}$}
                \RightLabel{$(\rig \square_2)$}
                \UIC{$\Omega,\mblock{\Phi,\rig{\square A}}$}
                \DP
                \\[.4cm]
                \AXC{$\Omega,\mblock{\lef A}$}
                \RightLabel{$(\lef \Diamond_1)$}
                \UIC{$\Omega,\lef {\Diamond A}$}
                \DP
                \quad
                \AXC{$\Phi^+,\mblock{\lef A}$}
                \RightLabel{$(\lef \Diamond_2)$}
                \UIC{$\Omega,\mblock{\Phi,\lef {\Diamond A}}$}
                \DP
                &
                \AxiomC{$\Omega,\mblock{\Phi,\rig A}$}
                \RightLabel{($\rig \Diamond$)}
                \UnaryInfC{$\Omega,\rig{\Diamond A},\mblock{\Phi}$}
                \DP
            \end{tabular}
        \end{adjustbox}
    \end{center}
    \caption{The calculus \lcfik}
    \label{fig:shallow-fik}
    \end{framed}
\end{figure}

For example, let $\Omega=\lef A,\lef{A\supset B},\mblock{\lef{C\vee D},\rig E}$, then $\dow\Omega=\lef A,\lef{A\supset B},\mblock{\lef{C\vee D}}$. 
It is easy to see that $\dow\Omega$ is an input sequent; moreover, if $\Omega$ itself is an input sequent, then $\Omega=\dow\Omega$.  
In 
the same 
example, we have $\fm{\Omega}=\{\lef A,\lef {A\supset B}\}$.

We will need to argue about the size of sequents when discussing complexity bound. 
As we have already defined in the previous section  the size $|A|$ of a formula $A$ and its modal degree $\md{A}$, let us now extend these notions to multisets:  
the size $|\Phi|$  of a multiset of formulas $\Phi$  is defined (in the obvious way)  as the sum of the sizes of the formulas in the multi-set (counting multiplicity); 
the modal degree $\md{\Phi}:=\max\{\md{A}~|~A\in\Phi\}$. 

\begin{definition}[Size and modal degree of a sequent]
Given a sequent $\Omega=\Phi,[\Psi_1],\ldots,[\Psi_l]$, its size is defined as 
$|\Omega|  = |\Phi|+ |\Psi_1| + \ldots +|\Psi_l|$. 
The size of its output, denoted $|\outp(\Omega)|$, is the size of the unique output formula of $\Omega$; 
the size of its input, denoted $|\inp(\Omega)|$ is defined as $|\Omega^*|$
Moreover we define $\md{\Omega}=\max\{\md{\Phi},\md{\Psi_1}+1,\ldots,\md{\Psi_l}+1\}$. 
\end{definition}

The calculus \lcfik for \fik  operates on sequents and contains all the rules in Figure \ref{fig:shallow-fik}. Initial sequents are of four types, namely $\lef{\bot_1},\lef{\bot_2},\id_1$ and $\id_2$. 
Rules except $\lef \square$ and $\rig \Diamond$ all appear in pairs, denoted $\rig{r_1}$ and $\rig{r_2}$ (resp. $\lef{r_1}$ and $\lef{r_2}$), where the former applies to the formula part of a sequent, while the latter operates inside a block. We uniformly refer to both members of the pair as $\rig r$ (resp. $\lef r$), e.g. when we say $\rig \supset$, it means both $\rig {\supset_1}$ and $\rig {\supset_2}$. 


\begin{definition}[Formula occurrences] 
    Given an application $(r)$  of a rule with the conclusion $S$,  we say that an occurrence of a formula $A$  in $S$ is \emph{principal} if it is  explicitly treated in that application of the rule (that is reading backward the application $(r)$,   $A$ is decomposed or propagated in at least one premise).  We say that an occurrence of $A$  is  \emph{side} (or contextual) is it is not principal and it occurs in at least one premise, and it is weak if does not occur in any premises. 
\end{definition}

For example, in $(\rig {\vee_2})$, the explicit $\rig {A\vee B}$ is principal and formulas in $\Omega$ and $\Phi$ are all side; in $(\lef {\Diamond_2})$, the explicit $\lef {\Diamond A}$ is principal, formulas in $\Phi$ are side and formulas in $\Omega$ are weak. 

The notions of derivations, proofs (e.g. a derivation whose leaves are axioms namely initial sequents) and height of a derivation are defined as usual in sequent calculi. We write $\vdash_n\Omega$ to denote the proposition that $\Omega$ has a proof of height $n$ in \lcfik. Moreover, we say a formula $A$ is provable in \lcfik if so is the sequent $\rig A$. 




\begin{example}
    Axiom $(\textbf{wCD}): \square(p\vee q)\supset ((\Diamond p\supset \square q)\supset\square q)$ 
    and the formula $(\neg\square \bot\supset\square\bot)\supset\square\bot$ are provable in \lcfik. \footnote{Note that this $\Diamond$-free formula is not valid in \textbf{CK}, which makes the $\Diamond$-free fragments of these two logics different \cite{anupam:2023}. } 
    The proofs are found in Figure \ref{fig:proof-wcd}.
\end{example}

\begin{figure}[t]
    \begin{framed}
    \begin{center}
        \begin{adjustbox}{max width=\textwidth}
        \begin{tabular}{c}
        \AXC{$\lef{\square(p\vee q)}, \lef{\square q}, [\lef q, \lef{p}, \rig q]$}
        \RightLabel{$\lef {\square}$}
        \UIC{$\lef{\square(p\vee q)}, \lef{\square q}, [\lef{p}, \rig q]$}
        \AXC{$\lef{\square(p\vee q)}, \lef{\Diamond p\supset \square q}, [\lef{p},\rig p]$}
        \RightLabel{$\rig \Diamond$}
        \UIC{$\lef{\square(p\vee q)}, \lef{\Diamond p\supset \square q}, [\lef{p}],\rig{\Diamond p}$}
        \RightLabel{$\lef {\supset_1}$}
        \BIC{$\lef{\square(p\vee q)}, \lef{\Diamond p\supset \square q}, [\lef{p}, \rig q]$}
        \AXC{$\lef{\square(p\vee q)}, \lef{\Diamond p\supset \square q}, [\lef{q}, \rig q]$}
        \RightLabel{$\lef {\vee_2}$}
        \BIC{$\lef{\square(p\vee q)}, \lef{\Diamond p\supset \square q}, [\lef{p\vee q}, \rig q]$}
        \RightLabel{$\lef {\square}$}
        \UIC{$\lef{\square(p\vee q)}, \lef{\Diamond p\supset \square q}, [\rig q]$}
        \RightLabel{$\rig {\square_1}$}
        \UIC{$\lef{\square(p\vee q)}, \lef{\Diamond p\supset \square q}, \rig{\square q}$} 
        \RightLabel{$\rig{\supset_1}\times 2$}
        \UIC{$\rig{\square(p\vee q)\supset ((\Diamond p\supset \square q)\supset\square q)}$}
        \DP
        \\ \\
        \AXC{$\lef {\neg\square \bot\supset\square\bot}, \mblock{\lef \bot},\lef{\square\bot},\rig {\bot}$}
        \RightLabel{$\lef {\square}$}
        \UIC{$\lef {\neg\square \bot\supset\square\bot}, \mblock{\varnothing},\lef{\square\bot},\rig {\bot}$}
        \RightLabel{$\rig{\supset_1}$}
        \UIC{$\lef {\neg\square \bot\supset\square\bot}, \mblock{\varnothing},\rig {\neg\square\bot}$}
        \AXC{$\lef {\square\bot}, \mblock{\lef \bot, \rig\bot}$}
        \RightLabel{$\lef {\square}$}
        \UIC{$\lef {\square\bot}, \mblock{\rig\bot}$}
        \RightLabel{$\lef {\supset_1}$}
        \BIC{$\lef {\neg\square \bot\supset\square\bot}, \mblock{\rig\bot}$}
        \RightLabel{$\rig {\square_1}$}
        \UIC{$\lef {\neg\square \bot\supset\square\bot}, \rig {\square\bot}$}
        \RightLabel{$\rig{\supset_1}$}
        \UIC{$\rig {(\neg\square \bot\supset\square\bot)\supset\square\bot}$}
        \DP
        \end{tabular}
        \end{adjustbox}
    \end{center}
    \caption{Examples of proofs in \lcfik}
    \label{fig:proof-wcd}
    \end{framed}
\end{figure}

Sequents are interpreted according to the location of the output formula and only when the output formula occurs on the top-level  the sequents have a formula interpretation. 

\begin{definition}[Semantic interpretation]
    Let $\+M=(W,\leq,R,V)$ be a bi-relational model and $x,y\in W$. For a sequent $\Omega$, we define $\+M, x\Vdash \Omega$ according to the location of its output formula as follows
    \begin{enumerate}
        \item[a.] if $\Omega=\Phi,\overline{[\Psi_i]}_{i\in I},\rig F$ where $\Phi$ is a multi-set of $\bullet$-formulas, 
        \begin{center}
            $x\Vdash \Omega$~~iff~~$x\Vdash \bigwedge\Phi \wedge \bigwedge \{\Diamond \bigwedge \Psi_i\}_{i\in I}\supset F$ 
        \end{center}
        \item[b.] if $\Omega=\Phi,\overline{[\Psi_i]}_{i\in I},[\Xi,\rig F]$ where $\Phi$ is a multi-set of $\bullet$-formulas,  
        \begin{center}
            \begin{adjustbox}{max width=.9\textwidth}
            $x\Vdash \Omega$~~iff~~$\forall x'\geq x$ if $x'\Vdash \bigwedge\Phi\wedge\bigwedge \{\Diamond \bigwedge \Psi_i\}_{i\in I}$ then  $\forall y (Rx'y~\&~y\Vdash \bigwedge \Xi \
            \mbox{implies} \
            y\Vdash F)$
        \end{adjustbox}
        \end{center}
    \end{enumerate}
    $\Omega$ is called valid if $\+M,x\Vdash \Omega$ holds for any $\+M$ and any $x$. 
\end{definition}

Based on the semantic interpretation, we show the soundness of \lcfik. 

\begin{restatable}{theorem}{thmsoundness}\label{thm:soundness}
        For any sequent $\Omega$, if it is derivable in \lcfik, then it is valid in $\mfik$. 
\end{restatable}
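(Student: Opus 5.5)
We prove the statement by induction on the height of the derivation of $\Omega$ in \lcfik. For each rule we show that, in every \fik-model, validity of the premises implies validity of the conclusion. We shall use two standard facts throughout. The first is the hereditary property, which holds for all formulas thanks to $(\fc)$. The second is a convenient reformulation of the interpretation of both kinds of sequents. Write $\Theta(\Omega)$ for $\bigwedge\fm{\Omega}^\bullet\wedge\bigwedge_i\Diamond\bigwedge\Psi_i$, the conjunction of the input formulas at the top level together with the diamonds of the input parts of the blocks. By hereditariness, case (a) says: for every $x'\geq x$ with $x'\Vdash\Theta(\Omega)$, we have $x'\Vdash F$. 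Case (b) says: for every $x'\geq x$ with $x'\Vdash\Theta(\Omega)$ and every $y$ with $Rx'y$ and $y\Vdash\bigwedge\Xi$, we have $y\Vdash F$. So both cases have the shape ``for every $x'\geq x$ satisfying the context, the output holds at the relevant world.'' We argue contrapositively: assuming the conclusion fails at some $x$, we obtain a witness $x'\geq x$ (and possibly a successor $y$) and show that some premise fails.

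The initial sequents are immediate. For $\lef\bot_2$, a block containing $\lef\bot$ contributes $\Diamond(\cdots\wedge\bot)$, which is never forced; so the antecedent is false at every world, and the sequent holds whether the output sits at the top or in another block. For $\id_2$, we have $y\Vdash p$ from $\Xi$. The propositional rules are checked componentwise. For the $_2$ versions, a block's conjunction under $\Diamond$ behaves like a conjunction at the successor. For $\lef\vee_2$ we use that $\Diamond(\Psi\wedge(A\vee B))$ at $x'$ yields a successor forcing $A$ or $B$, so the appropriate premise applies at the same $x'$.

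The delicate implication rules use the $*$-operator. Take $\lef\supset_1$: assume $x'\Vdash\Theta$ with $x'\geq x$. Since $\lef{A\supset B}$ is in $\Theta$, the first premise $\Omega^*,\lef{A\supset B},\rig A$ (with the output now at the top) gives $x'\Vdash A$. Hence $x'\Vdash B$, and then the second premise gives the conclusion. For $\lef\supset_2$, the premise with output inside the block yields $y\Vdash A$, hence $y\Vdash B$, and we conclude with the second premise. For $\rig\supset_2$ and $\rig\square_2$, the premises' top level is the former block $\Phi$. Validity of the premise $\Phi,\lef A,\rig B$ is then applied at $y$ and at its $\leq$-successors, which is legitimate since the premise is valid at every world. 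For $\rig\square_2$ in particular, given $y\leq y'$ and $Ry'z$, the premise $\Phi,[\rig A]$ at $y$ yields $z\Vdash A$ because $y'\Vdash\Phi$ by hereditariness. For $\rig\square_1$, given $x'\leq x''$ and $Rx''z$, apply the premise at $x''$, which still forces $\Theta$ and now has an empty block realized by $z$. For $\lef\square$, from $x'\Vdash\square A$ and $Rx'y$ we get $y\Vdash A$, so adding $\lef A$ to the block is harmless. This works whether the output is at the top, in that block, or in another block, because the witness successor for the block's $\Diamond$ is the same $y$. For $\rig\Diamond$, the witness $y$ of the block's diamond at $x'$ forces $\Phi$ and, by the premise in case (b), also $A$; hence $x'\Vdash\Diamond A$. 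For $\lef\Diamond_1$, the new block's diamond is exactly $\Diamond A$. For $\lef\Diamond_2$, given $Rx'y$ with $y\Vdash\Phi\wedge\Diamond A$, apply the premise $\Phi^+,[\lef A]$ at $y$ to get $y\Vdash F$, or $y\Vdash\bot$ when $\Phi$ had no output. The latter is impossible, which covers the case where the output lies elsewhere: then the antecedent of the conclusion is unsatisfiable.

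The main obstacle is the bookkeeping of \emph{where the output formula lies} relative to the active block. The rules $\lef\square$, $\lef\vee_2$ and $\lef\wedge_2$ must be checked in three configurations each: output at the top, in the active block, or in another block. When the output is in a different block, the active block is only a $\Diamond$-assumption at $x'$. Its witness successor is arbitrary but fixed, and case (b) quantifies over all successors $y$ of $x'$ satisfying $\Xi$. One must check that reasoning about the witness $z$ of the active block does not interfere with the universally quantified $y$; it does not, since both are taken at the same $x'$. I would handle this uniformly via the reformulated ``for every $x'\geq x$'' semantics above. Notably, $(\fc)$ is used only through hereditariness; the rules never require backward confluence.
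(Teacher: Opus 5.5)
Your proposal is correct and follows essentially the same route as the paper: induction on the derivation, checking contrapositively that each rule preserves validity, with $(\lef{\supset_2})$ and $(\lef{\Diamond_2})$ treated as in the paper's two worked cases, and the remaining rules covered more fully than the paper does. One detail to make explicit is in $(\lef{\supset_1})$ and $(\lef{\supset_2})$ when the conclusion's output lies in a block $[\Xi,\rig F]$ of $\Omega$: there the left premise built from $\Omega^*$ turns that block into an input block, adding a conjunct $\Diamond\bigwedge\Xi$ to its antecedent, and this conjunct is discharged precisely by the failing successor $y$ supplied by your contrapositive assumption.
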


\begin{proof}
    We prove this by induction on the structure of a derivation. It suffices to verify that, the initial sequents are valid and the inference rules preserve validity. 
    We only show some non-trivial cases.  
    \begin{description}
        \item[$(\lef \supset_2)$]  We only show the case when $\rig F\in\Phi$ as the other cases are similar. Assume w.l.o.g. that $\Omega$ is block-free and the output formula of the conclusion is in $\Omega$. Then the rule is with conclusion $\Omega^*,\mblock{\Phi, \lef{A\supset B}},\rig F$ and two premises $\Omega^*,\mblock{\dow\Phi,\lef{A\supset B},\rig A}$ and $\Omega^*,\mblock{\Phi, \lef{A\supset B}},\rig F$. 
        Assume by absurdity that this rule application does not preserve validity, then there exists a bi-relational model $\+M=(W,\leq,R,V)$ with $x\in W$ such that $x\nVdash\Omega^*,\mblock{\Phi, \lef{A\supset B}},\rig F$. By definition, there is $x'\in W$ such that $x\leq x'$ and $x'\Vdash\bigwedge\Omega\wedge \Diamond (\bigwedge \Phi\wedge (A\supset B))$ and $x'\nVdash F$. It follows that there is $y\in W$ such that $y\Vdash (A\supset B)\wedge \bigwedge\Phi$. Since the left premise is valid, we have $x'\Vdash \Omega^*,\mblock{\dow\Phi,\lef{A\supset B},\rig A}$. Thus $y\Vdash A$. 
        Recall $y\Vdash A\supset B$, so $y\Vdash B$. 
        Meanwhile, the right premise is also valid, so $x'\Vdash \Omega^*,\mblock{\Phi,\lef B},\rig F$, which implies $x'\nVdash \Diamond (\bigwedge\Phi\wedge B)$. Since $Rx'y$ and $y\Vdash \bigwedge \Phi$, we have $y\nVdash B$, a contradiction. 
        \item[$(\lef\Diamond_2)$] 
        The rule is of one of the two forms: conclusion $\Omega,[\Phi,\lef {\Diamond A},\rig F]$ and premise $\Phi,[\lef A],\rig F$; conclusion $\Phi,[\lef A],\rig \bot$ and premise $\Omega,[\Phi,\lef {\Diamond A}]$. 
        For the first case, 
        assume this rule does not preserve validity, then there exists a bi-relational model $\+M=(W,\leq,R,V)$ with $x\in W$ such that $x\nVdash \Omega,[\Phi,\lef {\Diamond A},\rig F]$. By definition, this means there is $x',y\in W$ such that $x'\geq x, Rx'y$ and $y\Vdash \bigwedge \Phi \wedge \Diamond A$ and $y\nVdash F$. Meanwhile, since the premise is valid, we have $y\Vdash \Phi,[\lef A],\rig F$, which means $y\Vdash \bigwedge \Phi\wedge  {\Diamond A}\supset F$. 
        Since $y\leq y$, by $y\Vdash \bigwedge \Phi \wedge \Diamond A$ it follows $y\Vdash F$, 
        a contradiction. 
        For the second case, the output formula occurs in $\Omega$. 
        Assume this rule application does not preserve validity, then there exists a bi-relational model $\+M=(W,\leq,R,V)$ with $x\in W$ such that $x\nVdash \Omega,[\Phi,\lef {\Diamond A}]$. By definition, regardless of the exact location of the output formula, there exists $x'\in W$ such that $x'\geq x$ such that $x'\Vdash \Diamond (\bigwedge \Phi \wedge \Diamond A)$. This further implies there is $y\in W$ such that $Rx'y$ and $y\Vdash \bigwedge \Phi \wedge \Diamond A$. 
        Meanwhile, since the premise is valid, we have $y\Vdash \Phi,[\lef A],\rig \bot$, which means $y\nVdash \bigwedge \Phi\wedge  {\Diamond A}$, 
        a contradiction. 
    \end{description}
    This completes the proof. 
\end{proof}

\section{Admissibility of cut and completeness}\label{sec:completeness}

In this section we prove \lcfik is syntactically complete, meaning that all theorems of $\hfik$ can be derived by  \lcfik. As usual we first prove that the calculus \lcfik augmented with suitable cut rules is complete. Then we show that the cut rules are admissible in \lcfik. 
In accordance with the form of the sequents, we need to consider two cut rules that operate respectively on the top level and within a block. 

\begin{definition}[Cut-rules]
    The cut rules are the following:
    $$
    \begin{aligned}
        \AxiomC{$\dow\Omega,\rig A$}
        \AxiomC{$\lef A, \Omega$}
        \RightLabel{$(\cut)$}
        \BinaryInfC{$\Omega$}
        \DP
        &
        \quad
        &
        \AxiomC{$\dow\Omega,\mblock{\dow\Phi,\rig A}$}
        \AxiomC{$\Omega,\mblock{\lef A, \Phi}$}
        \RightLabel{$(\mcut)$}
        \BinaryInfC{$\Omega,\mblock{\Phi}$}
        \DP
    \end{aligned}
    $$
    In both rules, the formula $A$ which occurs explicitly in both premisses in both polarized forms $\lef A,\rig A$ is called the \textbf{cut formula}. 
\end{definition}

We can easily prove the syntactic completeness of \lcfik augmented by the two cut rules.  

\begin{theorem}[Completeness of \lcfik +$(\cut)$+$(\mcut)$]
    If a formula $A$ is provable in $\hfik$, then it is provable in \lcfik+$(\cut)$+$(\mcut)$. 
\end{theorem}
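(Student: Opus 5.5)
We argue by induction on the length of a Hilbert derivation in $\hfik$. We show that every axiom $A$ of $\hfik$ gives a provable sequent $\rig A$ in \lcfik+$(\cut)$+$(\mcut)$, and that both rules $\modp$ and $\nec$ preserve this property.

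We begin with a lemma that makes the axiom cases routine: generalised identity. For every formula $A$, every sequent $\Omega,\lef A,\rig A$ and every sequent $\Omega,\mblock{\Phi,\lef A,\rig A}$ is provable. The proof is a simultaneous induction on $|A|$, covering both positions at once. The propositional cases use the paired rules. For $\Box A$ at top level we apply $\rig{\square_1}$ and then $\lef\square$, which yields $\Omega,\lef{\Box A},\mblock{\lef A,\rig A}$; this is an instance of the block form of the induction hypothesis. For $\Box A$ inside a block we use $\rig{\square_2}$, which reduces to the top-level case. For $\Diamond A$ at top level we apply $\lef{\Diamond_1}$ and then $\rig\Diamond$. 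For $\Diamond A$ inside a block we apply $\lef{\Diamond_2}$ and then $\rig\Diamond$ in the new top level. Since the output sits inside the block, $\Phi^+=\Phi$ in that step.

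\textbf{Axioms.} The $\ipc$ axioms are derived by the usual intuitionistic sequent derivations, using generalised identity at the leaves. The modal axioms are derived directly.
\begin{itemize}
\item For $\ax{1}$: apply $\rig{\supset_1}$ twice, then $\rig{\square_1}$ to get $\mblock{\rig B}$, then $\lef\square$ twice to import $A\supset B$ and $A$ into the block, then $\lef{\supset_2}$.
\item For $\ax{2}$: apply $\rig{\supset_1}$, then $\lef{\Diamond_1}$ to get $\mblock{\lef A}$, then $\rig\Diamond$ to move $\rig B$ into that block, then $\lef\square$ and $\lef{\supset_2}$.
\item For $\ax{3}$: apply $\lef{\Diamond_1}$ and then $\lef{\vee_2}$ in the block. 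Each branch uses $\rig{\vee_1}$ followed by $\rig\Diamond$.
\item For $\ax{5}$: apply $\rig{\supset_1}$ and $\lef{\Diamond_1}$, which gives $\mblock{\lef\bot}$, closed by $\lef{\bot_2}$.
\item For $\wCD$: the derivation is already displayed in Figure~\ref{fig:proof-wcd}.
\end{itemize}

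\textbf{Rules.} For $\modp$, assume $\rig{A\supset B}$ and $\rig A$ are provable. We first need that $\rig{\supset_1}$ is invertible, giving $\lef A,\rig B$; this can be obtained without proving invertibility, as follows. Weakening gives $\lef A,\rig{A\supset B}$. Separately, $\lef A,\lef{A\supset B},\rig B$ follows from $\lef{\supset_1}$ with identities. A $(\cut)$ on $A\supset B$ then yields $\lef A,\rig B$, and a second $(\cut)$ with $\rig A$ yields $\rig B$.

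For $\nec$, from a proof of $\rig A$ we want $\mblock{\rig A}$, after which $\rig{\square_1}$ gives $\rig{\Box A}$. We would prove a lemma: if $\Phi$ is provable, then $\Omega,\mblock{\Phi}$ is provable for any input sequent $\Omega$ without blocks. The idea is to shift a top-level proof into a block. This needs some care, because a top-level proof may contain blocks (via $\rig{\square_1}$ or $\lef{\Diamond_1}$), and these cannot be nested. The shifting works because the block rules $\rig{\square_2}$ and $\lef{\Diamond_2}$ jump back to the top level: a top-level $\rig{\square_1}$ step becomes a $\rig{\square_2}$ step whose premise is exactly the original premise with an empty top context. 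Likewise $\lef{\Diamond_1}$ becomes $\lef{\Diamond_2}$, where $\Phi^+$ adds $\rig\bot$ only when there is no output; in our case the output is present, so the premise is the original one. Propositional rules map to their block versions.

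Both shifted rules also need weakening: the block rules discard $\Omega$, and the propositional block rules need $\Omega$ as extra context. So we need admissibility of weakening in both components, proved by a routine height induction. We expect the main obstacle to be this weakening and shifting argument, because the $+$-operator interacts with the placement of the output formula. The fallback is to restrict the lemma to sequents whose output lies in the block being shifted.
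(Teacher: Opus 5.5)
Your proposal is correct and follows the paper's route: you derive the axioms directly, simulate $\modp$ by $(\cut)$, and obtain $\nec$ by shifting a proof of $\rig A$ into a block and applying $(\rig{\square_1})$. Along the way you add the generalised identity lemma the paper leaves implicit, and the right generalisation of the shifting claim to arbitrary simple sequents, which works because $(\rig{\supset_2})$, $(\rig{\square_2})$ and $(\lef{\Diamond_2})$ return to the top level so the original subproofs are reused verbatim. Two small fixes: the shifted proof lives in the system with cuts, so a top-level $(\cut)$ must be mapped to $(\mcut)$ (this works since $\dow\Omega=\Omega$); and the shifting lemma needs no weakening, because the propositional block rules carry $\Omega$ along and $\Phi$ always contains the output, so $\Phi^+=\Phi$ and your ``fallback'' is already the general case.
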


\begin{proof}
We show that any instance of axioms of  $\hfik$ can be proved in \lcfik and   that the rule of (mp)  is simulated by $(\cut)$ as usual. 
The proof of axiom ($\wCD$) is given in Figure \ref{fig:proof-wcd} as an example. 
For rule ({\sf nec}), we prove by induction on a derivation that for any formula $A$ if $\rig A$ is provable in \lcfik +$(\cut)+(\mcut)$, $[\rig  A]$ is also provable in the same system, and then we conclude by an application of $(\rig \square_1)$. 
\end{proof}

The rest of the section is devoted to a syntactic proof of the admissibility of both cut rules. We introduce some preliminary definitions and prove some preliminary facts. 

\begin{definition}[Admissibility and invertibility]
    Let ${\cal R}$ be an inference rule  of the form $\frac{S_1 \ (S_2)}{S}$. \footnote{By $\frac{S_1 \ (S_2)}{S}$ we mean two possible rules either with one premise $S_1$ or two premises $S_1$ and $S_2$.}
    We say that ${\cal R}$ is \emph{admissible} in \lcfik if for every application of the rule $(r) = \frac{S_1 \ (S_2)}{S}$: if  $\vdash_{n_i} S_i$ for some $n_i$,  then  $\vdash_n S$ for some $n$. 
    Rule ${\cal R}$ is further called \emph{height-preserving admissible} if for every application it holds $n\leq n_i$ for $i\in\{1,2\}$. 
    Conversely, ${\cal R}$ is called \emph{invertible} in \lcfik if for every application $\frac{S_1 \ (S_2)}{S}$ whenever $\vdash_n S$ it holds $\vdash_{n_i} S_i$ for $i\in\{1,2\}$. 
    and further \emph{height-preserving invertible} if $n_i\leq n$ for $i\in\{1,2\}$. 
\end{definition}

As usual, weakening is hp-admissible by induction on the height of a derivation: 
\begin{proposition}
    The following structural rules are height-preserving admissible in \lcfik,
    $$
    \begin{aligned}
        \AxiomC{$\Omega$}
        \RightLabel{$(\lef {w_1})$}
        \UnaryInfC{$\lef A,\Omega$}
        \DP
        &
        \quad
        &
        \AxiomC{$\Omega,\mblock{\Phi}$}
        \RightLabel{$(\lef{w_2})$}
        \UnaryInfC{$\Omega,\mblock{\lef A,\Phi}$}
        \DP
        &
        \quad
        &
        \AXC{$\Omega$}
        \RightLabel{$(w\upblock^+)$}
        \UIC{$\Omega,\mblock{\Phi}$}
        \DP
    \end{aligned}
    $$
\end{proposition}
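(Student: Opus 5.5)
We argue by induction on the height $n$ of a derivation of the premise, treating the three rules simultaneously, since the block-weakening $(w\upblock^+)$ and the two input weakenings interact only through the way the rules copy or discard context. In each case we show that if $\vdash_n \Omega$ (resp. $\vdash_n \Omega,\mblock{\Phi}$), then $\vdash_n \lef A,\Omega$, $\vdash_n \Omega,\mblock{\lef A,\Phi}$, and $\vdash_n \Omega,\mblock{\Phi}$ for any input multiset $\Phi$. Here $\Phi$ is a multiset of $\bullet$-formulas, so adding the block does not disturb the uniqueness of the output formula.

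\emph{Base case.} If the premise is an initial sequent $(\lef\bot_1),(\lef\bot_2),(\id_1),(\id_2)$, the weakened sequent is still an initial sequent. These axioms have an arbitrary context $\Omega$, and in the block versions an arbitrary block context $\Phi$, and the principal $\lef\bot$ or the pair $\lef p,\rig p$ stays in place.

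\emph{Inductive step.} Let $(r)$ be the last rule. We split rules according to how they treat context.

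The first group consists of the context-preserving rules: all propositional rules except $\rig\supset_2$ and $\lef\Diamond_2$, together with $\lef\square$, $\rig\Diamond$, $\rig\square_1$ and $\lef\Diamond_1$. For these we apply the induction hypothesis to the premise(s) and reapply $(r)$. The weakened formula or block is side in every premise. For $\lef\supset$ it must also be added to the left premise $\Omega^*,\ldots$; this is fine because $\dow{}$ does not touch input formulas or input blocks. The new block is never principal for $\lef\square$ or $\rig\Diamond$, since these rules act on a designated block.

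The second group consists of the rules with weak context, namely $\rig\supset_2$, $\rig\square_2$ and $\lef\Diamond_2$. Here the conclusion has the shape $\Omega,\mblock{\Phi,\ldots}$ and the premise retains only the block content. Weakening in $\Omega$, or adding a fresh block, lands among the weak formulas, so we simply reapply the rule to the same premise with height unchanged. Weakening inside the principal block, $\Omega,\mblock{\lef A,\Phi,\ldots}$, turns into a top-level weakening $\lef A,\Phi,\ldots$ of the premise, which the induction hypothesis for $(\lef w_1)$ handles. For $\lef\Diamond_2$ we must check that $(\lef A,\Phi)^+=\lef A,\Phi^+$, which is immediate.

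This is why the rules should be proved together: $(\lef w_2)$ relies on $(\lef w_1)$ at smaller height.

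\emph{Expected obstacle.} The only delicate point is the bookkeeping of the output formula under $\dow{}$ and $^+$ in $\lef\supset_2$ and $\lef\Diamond_2$. Since the weakenings add only input material, these operators commute with weakening, and the height is never increased.
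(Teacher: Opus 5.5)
Your proposal is correct and takes the same route as the paper, which only asserts that weakening is height-preserving admissible by induction on the height of derivations. Your case split supplies the details the paper omits, notably that $(\lef{w_2})$ on the principal block of $(\rig{\supset_2})$, $(\rig{\square_2})$ or $(\lef{\Diamond_2})$ reduces to $(\lef{w_1})$ on the premise, using $(\lef C,\Phi)^+=\lef C,\Phi^+$ (the only blemish is cosmetic: $(\lef{\Diamond_2})$ is not a propositional rule, so your first group should read ``all propositional rules except $(\rig{\supset_2})$'').
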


\begin{restatable}[Inversion lemma]{lemm}{lemmainver}\label{inver-lemma}
    $(\lef\wedge) (\rig\wedge)(\lef \vee)(\lef \square)(\rig{\supset_1})(\rig{\square_1})$ 
    and $(\lef{\Diamond_1})$ 
    are height-preserving invertible in \lcfik. Moreover, $(\lef \supset)$ is height-preserving invertible only for the left premise, which means if $\vdash_n \Omega,\lef{A\supset B}$ then $\vdash_n \Omega,\lef B$ and if $\vdash_n \Omega,[\Phi,\lef{A\supset B}]$ then $\vdash_n \Omega, [\Phi,\lef B]$.
\end{restatable}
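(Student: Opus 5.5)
We argue, for each listed rule separately, by induction on the height $n$ of a proof of the conclusion, in the standard way. For each rule we state the desired transformation. For $(\lef\wedge_1)$ it is: $\vdash_n \Omega,\lef{A\wedge B}$ implies $\vdash_n\Omega,\lef A,\lef B$. The other rules are handled analogously:
\begin{itemize}
\item for $(\lef\square)$ we add $\lef A$ into an arbitrary existing block;
\item for $(\rig{\supset_1})$ we move $A$ to the input and keep $\rig B$;
\item for $(\rig{\square_1})$ we pass from $\Omega,\rig{\square A}$ to $\Omega,[\rig A]$;
\item for $(\lef{\Diamond_1})$ we pass from $\Omega,\lef{\Diamond A}$ to $\Omega,[\lef A]$.
\end{itemize}
We then inspect the last rule $(r)$ of the given proof.

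\textbf{Base case.} If the conclusion is an initial sequent, then so is the transformed sequent. The principal $\lef\bot$ or $\lef p,\rig p$ is never the formula being inverted: $\square$, $\Diamond$, $\wedge$, $\vee$, $\supset$ formulas are not atomic. If the inverted formula is an output formula, for instance $\rig{A\wedge B}$ or $\rig{\square A}$, then the axiom cannot be $\id$ with that output. So the axiom must be a $\bot$-axiom, whose $\lef\bot$ survives the transformation. For $(\rig{\square_1})$ the output moves into a new block. An $\id_1$ axiom $\Omega,\lef p,\rig p$ is impossible there, since the output is $\square A$; a $\lef\bot$ axiom remains an axiom.

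\textbf{Principal case.} If the inverted formula is principal in $(r)$, the premise(s) already are the required sequent(s), possibly after hp-admissible weakening. For example, for $(\lef\square)$ with $\lef{\square A}$ principal we have the premise $\Omega,\lef{\square A},[\Phi,\lef A]$. We need $\lef A$ added to a possibly different block. In that case we apply the IH to the premise, which has lower height, and then reapply $(\lef\square)$. When the block coincides with the one treated, contraction would be needed. For $\lef\square$ we instead observe that the target block already contains $\lef A$ in the premise, so the premise itself witnesses the claim with height $n-1\le n$. For $(\lef\supset)$ only the right premise is claimed, and it is exactly $\Omega,\lef B$.

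\textbf{Non-principal case.} We apply the IH to each premise and reapply $(r)$. This requires checking that the inverted formula is still present in the premise in a form suited to the IH. The delicate rules are those that weaken or restructure context:
\begin{itemize}
\item $(\lef\supset)$'s left premise, which uses $\Omega^*$;
\item $(\rig\supset_2)$, $(\rig\square_2)$ and $(\lef\Diamond_2)$, which discard $\Omega$ and keep only the block content;
\item $(\rig\square_1)$ and $(\lef\Diamond_1)$, which create a new block;
\item $(\rig\Diamond)$, which moves the output into a block.
\end{itemize}
When the inverted input formula, say $\lef{A\wedge B}$ on the top level, is weak in $(r)$, we apply $(r)$ directly to the unchanged premise, since the transformed formulas are weak as well. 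For $(\lef\Diamond_1)$ inversion, a new block $[\lef A]$ is added. When $(r)$ discards the top level this is harmless. When $(r)$ keeps blocks, the IH applies to the premise and the extra block is carried along; here we use the $(w\upblock^+)$-style weakening if needed.

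\textbf{Main obstacle.} The hardest part is the output-formula inversions $(\rig{\supset_1})$, $(\rig{\square_1})$ and $(\rig\wedge)$. Rules such as $(\lef\supset)$ and $(\rig\Diamond)$ replace the output formula in one premise. In the left premise $\Omega^*,\lef{A'\supset B'},\rig{A'}$ of $(\lef{\supset_1})$, the output $\rig{\square A}$ has been removed, and the goal $\Omega,[\rig A]$ has its output in a block. We handle this by applying $(\lef\supset_1)$ to the target $\Omega,\lef{A'\supset B'},[\rig A]$. Its left premise must be $\Omega^*,[\,],\lef{A'\supset B'},\rig{A'}$, which we obtain by weakening $(w\upblock^+)$ from the original left premise; the right premise comes from the IH. Similar careful bookkeeping handles $(\lef\Diamond_2)$ and $(\rig\Diamond)$, whose premises discard or relocate the output. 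These cases are where I expect the real work to lie.
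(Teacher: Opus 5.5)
Your proposal is correct and follows the same route as the standard argument the paper relies on (its proof is deferred to an appendix not included here): induction on the height of the given proof, a case analysis on the last rule, the induction hypothesis on the premises, and height-preserving weakening (including $(w\upblock^+)$) where $(\lef\supset)$'s left premise or $(\lef{\Diamond_2})$ drops the output. Two small points should be made explicit. First, the block versions of $(\lef\wedge),(\rig\wedge),(\lef\vee),(\lef\supset)$ must be proved after, or jointly with, the top-level versions, since $(\rig{\supset_2})$, $(\rig{\square_2})$ and $(\lef{\Diamond_2})$ acting on the relevant block move its content to the top level. Second, $(\lef\square)$ is just an instance of hp-admissible $(\lef{w_2})$; this also repairs your claim that a weak $\lef{\square A}$ lets you reuse the premise unchanged, which fails when $(r)$ acts on the very block that receives $\lef A$.
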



According to the lemma above, the rules in \lcfik are then divided into the invertible and non-invertible groups, see Table \ref{table:invertible-rules}. Moreover, with the lemma and definition above, we show the following, 

\begin{table}[t]
\centering
\begin{tabular}{|c|c|}
    \hline
    invertible rules & $(\lef\wedge) (\rig\wedge)(\lef \vee)(\lef \square)(\rig{\supset_1})(\rig{\square_1})(\lef{\Diamond_1})$ \\[1ex]
    \hline
    non-invertible rules & $(\rig{\vee})(\rig{\supset_2})(\rig{\square_2})(\rig{\Diamond})(\lef{\Diamond_2})$\\[1ex]
    \hline
    rules with one invertible premise & $(\lef \supset)$\\[1ex]
    \hline
\end{tabular}

~

\caption{Invertible and non-invertible rules in \lcfik}
\label{table:invertible-rules}
\end{table}

\begin{restatable}{propp}{propwc}\label{prop:admissibility-weakening-contraction}
    The following structural rules are height-preserving admissible in \lcfik:
    $$
    \begin{aligned}
        \AxiomC{$\lef A,\lef A,\Omega$}
        \RightLabel{$(\lef {c_1})$}
        \UnaryInfC{$\lef A,\Omega$}
        \DP
        &
        \quad
        &
        \AxiomC{$\Omega,\mblock{\lef A,\lef A,\Phi}$}
        \RightLabel{$(\lef{c_2})$}
        \UnaryInfC{$\Omega,\mblock{\lef A,\Phi}$}
        \DP
        &
        \quad
        &
        \AXC{$\Omega,\mblock{\Phi},\mblock{\dow\Phi}$}
        \RightLabel{$(c\upblock^-)$}
        \UIC{$\Omega,\mblock{\Phi}$}
        \DP
        &
        \quad
        &
        \AXC{$\Phi$}
        \RightLabel{$(\mblock{\cdot})$}
        \UIC{$\Omega,\mblock{\Phi}$}
        \DP
    \end{aligned}
    $$
\end{restatable}

\begin{corollary}
    If $\Omega,\Omega^*$ is provable, then $\Omega$ is provable. 
\end{corollary}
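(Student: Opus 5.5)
The plan is to obtain the corollary by finitely many applications of the height-preserving admissible contraction rules of Proposition~\ref{prop:admissibility-weakening-contraction}. No new induction on derivations should be needed; the only induction is on the size of $\Omega$.

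First I will write $\Omega$ explicitly as $\Phi,[\Psi_1],\ldots,[\Psi_l]$, where $\Phi$ is the top-level component. By definition of the $*$-operator, $\Omega^*=\Phi^*,[\Psi_1^*],\ldots,[\Psi_l^*]$. Here $\Phi^*$ (resp.\ $\Psi_i^*$) is $\Phi$ (resp.\ $\Psi_i$) with the output formula deleted if it occurs there, and is $\Phi$ (resp.\ $\Psi_i$) itself otherwise. Hence the sequent $\Omega,\Omega^*$ is
$$\Phi,\Phi^*,[\Psi_1],[\Psi_1^*],\ldots,[\Psi_l],[\Psi_l^*].$$
Every input formula in $\Phi^*$ has a matching occurrence in $\Phi$, and every block $[\Psi_i^*]$ sits next to the block $[\Psi_i]$ it was obtained from.

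Next I will remove the duplicates one at a time. For each formula $\lef A$ in $\Phi^*$, an application of $(\lef{c_1})$ merges it with its copy in $\Phi$. For each $i$, an application of $(c\upblock^-)$ to the pair $[\Psi_i],[\Psi_i^*]$ deletes $[\Psi_i^*]$; this instance has the exact form $\Omega',\mblock{\Psi_i},\mblock{\dow\Psi_i}$ over $\Omega',\mblock{\Psi_i}$. This step works uniformly whether or not $\Psi_i$ contains the output formula, since in the latter case $\Psi_i^*=\Psi_i$. Formally, I will argue by induction on $|\Phi^*|+l$, each step using one contraction instance.

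The only point needing care is that the rules match these instances exactly, in particular that the side context of each contraction may itself contain the output formula and other blocks. Proposition~\ref{prop:admissibility-weakening-contraction} allows an arbitrary context $\Omega$, so this is not a real obstacle. Since all these rules are height-preserving admissible, we even get $\vdash_n\Omega$ whenever $\vdash_n\Omega,\Omega^*$.
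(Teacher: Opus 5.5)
Your proposal is correct and is essentially the paper's own argument: the corollary is stated without a separate proof because it follows directly from Proposition~\ref{prop:admissibility-weakening-contraction}. Concretely, apply $(\lef{c_1})$ once for each input formula in the top-level part of $\Omega^*$ and $(c\upblock^-)$ once for each block $[\Psi_i]$ paired with $[\dow{\Psi_i}]$, exactly as you describe; since these rules are height-preserving admissible, the height is preserved as you note.
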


Next, we introduce the following measure in order to show cut admissibility. 

\begin{definition}[rank of $\cut$-application]
   To  each application $(r)$ of $(\cut)$ (resp. $(\mcut)$) in ${\cal D}$ whose premises are provable, 
    we associate 
    $\mathbf{r}_\cut(d,h)$ (resp. $\mathbf{r}_{\mcut}(d,h)$) as the \textbf{rank} of it, where $d$ denotes 
    size 
    of the cut formula and $h$ denotes sum of the heights of the proofs of two premises of $(r)$. \footnote{Usually $d$ denotes formula complexity but it also works here by using size, since the size of a formula strictly decreases in accordance with proper subformulas.}
    We take the lexicographic order on the pair $(d,h)$ as the order on the rank $\mathbf{r}_\cut(d,h)$ (resp. $\mathbf{r}_{\mcut}(d,h)$).  
\end{definition}

\begin{restatable}[Cut-admissibility]{theorem}{thmcut}
    Both $(\cut)$ and $(\mcut)$ are admissible in \lcfik, meaning that if a sequent $\Omega$ is derivable in \lcfik$+ \ \cut+\mcut$ then it is also derivable in \lcfik.
\end{restatable}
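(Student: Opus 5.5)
We prove the statement by showing that a topmost application of $(\cut)$ or $(\mcut)$, whose two premises are cut-free provable, can be eliminated. The argument is an induction on the rank $\mathbf{r}(d,h)$ under the lexicographic order. Given such an application, we distinguish cases according to the last rules applied in the proofs of the left premise $\dow\Omega,\rig A$ (resp.\ $\dow\Omega,\mblock{\dow\Phi,\rig A}$) and of the right premise $\lef A,\Omega$ (resp.\ $\Omega,\mblock{\lef A,\Phi}$). Every step produces cuts either on strictly smaller cut formulas, or on the same cut formula with smaller height sum $h$. These are then removed by the induction hypothesis. Throughout we use the tools stated earlier: hp-admissibility of weakening $(\lef{w_1}),(\lef{w_2}),(w\upblock^+)$, of contraction $(\lef{c_1}),(\lef{c_2}),(c\upblock^-)$, of the block rule $(\mblock{\cdot})$, and the Inversion lemma (Lemma~\ref{inver-lemma}).

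\emph{Base cases.} These are the cases where one premise is an initial sequent. If $\lef A$ is the $\lef p$ of an $\id$ axiom, the conclusion is the other premise up to weakening. If the axiom is independent of the cut formula, the conclusion is itself an axiom; this uses the fact that the $*$-operator leaves all input formulas in place.

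\emph{Non-principal cases.} Here the cut formula is not principal in one of the premises, and we permute the cut upward past that rule. On the right premise this is mostly routine. The delicate situations involve rules that weaken or relocate context: $(\rig{\supset_2}),(\rig{\square_2}),(\lef{\Diamond_2})$ discard $\Omega$, and $(\lef\supset)$ applies $*$ to its left premise. For these we check that the cut can be pushed into the premise that still contains $\lef A$, or else becomes unnecessary because $\lef A$ is weak there. The typical example is $(\rig{\supset_2})$ with $\lef A$ at top level: $\lef A$ is dropped, so the conclusion follows directly by hp-weakening. A cut at top level may turn into an $\mcut$ or vice versa. Two examples: $(\rig\Diamond)$ moves the output into a block, and $(\lef\Diamond_2)$ turns a block into a top-level sequent. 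This is why both cut rules must be handled simultaneously in a single induction.

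When the left premise ends with a rule acting on its input context, we permute on the left, using inversion or weakening to match contexts. For left rules that need the right premise in a form with the output elsewhere, we use the fact that $\dow\Omega$ ignores the output position.

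\emph{Principal cases.} These are reduced to cuts on subformulas. The propositional ones are standard. For $\supset$ we combine the $(\rig\supset)$ premise with both premises of $(\lef\supset)$, using two cuts on $A$ and $B$ together with a cut of the same formula but smaller height to remove the residual $\lef{A\supset B}$. The modal cases are the main obstacle:
\begin{itemize}
\item $\Box A$: $(\rig{\square_1})$ meets $(\lef\square)$. We cut $\lef{\square A}$ away at lower height, and then apply an $\mcut$ on $A$ inside the block $\mblock{\Phi,\lef A}$. This requires turning the premise $\Omega,\mblock{\rig A}$ into $\dow\Omega,\mblock{\dow\Phi,\rig A}$ by hp-weakening inside the block. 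In the block version, $(\rig{\square_2})$ against $(\lef\square)$ is impossible, because $\lef{\square A}$ in a block is never principal for $(\lef\square)$. So in that case the cut is non-principal on that side.
\item $\Diamond A$: $(\rig\Diamond)$ with block $\mblock{\Phi,\rig A}$ meets $(\lef{\Diamond_1})$ with premise $\Omega,\mblock{\lef A}$. We first weaken the latter to $\Omega,\mblock{\Phi,\lef A}$ and apply $\mcut$ on $A$. This leaves the sequent $\Omega,\mblock{\Phi},\mblock{\dow\Phi}$-shaped duplication, which is removed by $(c\upblock^-)$.
\item The block case $(\rig\Diamond)$-free interaction of $(\lef{\Diamond_2})$ with a top-level $\Diamond$-cut inside a block: here we use that the premise $\Phi^+,\mblock{\lef A}$ has the form of a top-level sequent, and apply a $\cut$ on $A$ after $(\mblock{\cdot})$.
\end{itemize}
Checking that the contexts match exactly in these modal cases is where we expect most of the work to lie.
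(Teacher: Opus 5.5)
\textbf{There is a genuine gap: the proposal does not handle the case where the left premise of an $\mcut$ ends with $(\lef{\Diamond_2})$ applied to the cut block.}

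Your overall architecture matches the paper's: one simultaneous induction on the lexicographic rank for $(\cut)$ and $(\mcut)$, split into axiom, principal and side cases. Your reductions for $\Box$ against $(\lef\square)$ and for top-level $\Diamond$ against $(\lef{\Diamond_1})$ are the paper's case (D2). The missing case is the one the paper spends most of its effort on, (E4) with $(e_1)=(\lef{\Diamond_2})$. Here the left premise $\dow\Omega,\mblock{\dow\Phi,\lef{\Diamond B},\rig A}$ is obtained by $(\lef{\Diamond_2})$ from the top-level sequent $\dow\Phi,\mblock{\lef B},\rig A$. Since $\rig A$ is side, your recipe says to permute on the left. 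That requires a matching top-level sequent $\lef A,\Phi^+,\mblock{\lef B}$, and neither inversion nor weakening produces it from the right premise $\Omega,\mblock{\lef A,\Phi,\lef{\Diamond B}}$. The reason is that $(\lef{\Diamond_2})$ is not invertible: it discards $\Omega$, whose $\Box$-formulas may be exactly what makes the block provable. For example, $\lef{\Box p},\mblock{\lef{\Diamond q},\rig p}$ is provable while $\mblock{\lef q},\rig p$ is not. So neither $(\cut)$ nor $(\mcut)$ applies to the premise of $(\lef{\Diamond_2})$ together with the right premise. Your remark that $(\lef{\Diamond_2})$ ``turns a block into a top-level sequent'' only helps when \emph{both} premises have been zoomed into the same block.

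The paper deals with this by a second-level case analysis on the last rule $(e_2)$ of the right premise:
\begin{itemize}
\item If $(e_2)$ also zooms into that block ($\rig{\supset_2}$, $\rig{\square_2}$, or $\lef{\Diamond_2}$ with $\lef A$ side), both sides are top-level and a $(\cut)$ of smaller height applies.
\item If $(e_2)$ does not zoom and $\lef A$ is side, the $\mcut$ is permuted above $(e_2)$.
\item If $\lef A$ is principal in $(e_2)$, the paper inspects the rule $(e_1')$ above $(\lef{\Diamond_2})$. When $\rig A$ is principal there, it re-applies $(\lef{\Diamond_2})$ to the premises of $(e_1')$ and cuts on subformulas. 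When $\rig A$ is side and $A$ is $\Diamond D$ or $C\vee D$ (whose $\circ$-rules are not invertible), it traces $\rig A$ up each branch of the left derivation (subcases (h1)--(h3)).
\end{itemize}

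Your third modal bullet comes from the same oversight. No rule decomposes $\rig{\Diamond A}$ inside a block, so an $\mcut$ on $\Diamond A$ is never principal on the left; it is exactly this zooming configuration. Also, ``a $\cut$ on $A$ after $(\mblock{\cdot})$'' cannot work, because $(\mblock{\cdot})$ pushes a top-level sequent into a block, the opposite of what is needed.

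A minor point: in the axiom case where $\rig p$ is principal on the left, the conclusion comes from the right premise by contraction, not weakening.
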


\begin{proof}
    We 
    carry on 
    the proof by mutual induction on $\cut(d_1,h_1)$ and $\mcut(d_2,h_2)$, where 
    \begin{itemize}
        \item $\cut(d_1,h_1)$:=  every application $(r)$ of $\cut$ in a derivation 
        has a rank strictly smaller than $(d_1,h_1)$
        \item $\mcut(d_2,h_2)$:= every application $(r)$ of $\mcut$ in a derivation 
        has a rank strictly smaller than $(d_2,h_2)$
    \end{itemize}
    Therefore, the induction hypotheses we use are the following:\\

    \begin{adjustbox}{max width=\textwidth}
    \setlength{\tabcolsep}{.5mm}{
    \begin{tabular}{lll}
        \textbf{IH1} & & $\forall(d_1,h_1). 
        \Bigl[(\forall (d_2,h_2)<(d_1,h_1). \mcut (d_2,h_2))~\&~(\forall (d_1',h_1')<(d_1,h_1). \cut (d_1',h_1'))\Bigr] \to \cut(d_1,h_1)$\\[.3cm]
        \textbf{IH2} & & $\forall(d_2,h_2). \Bigl[(\forall (d_1,h_1)<(d_2,h_2). \cut (d_2,h_2))~\&~(\forall (d_2',h_2')<(d_2,h_2). \mcut (d_2',h_2'))\Bigr] \to \mcut(d_2,h_2)$\\
        \\
    \end{tabular}
        }
    \end{adjustbox}
    
    The two inductive hypotheses will take care of the two derivations of the following form, denoted as $\+D$ and $\+E$ respectively, 
    \begin{center}
        \vspace{-.5cm}
        \begin{adjustbox}{max width=\textwidth}
        \begin{tabular}{ccc}
            $
            \vlderivation{
            \vliin{}{(\cut)}{\Omega}{
            \vltr {\+D_1} {
                    \dow\Omega,\rig A
                }{
                \vlhy {\quad}}
                {
                \vlhy {}}
                {
                \vlhy {\quad}}
            }
            {
                \vltr {\+D_2} {
                    \lef A,\Omega
                }{
                \vlhy {\quad}}
                {
                \vlhy {}}
                {
                \vlhy {\quad}} 
            }
            }
            $
            &
            \quad \quad
            &
            $
        \vlderivation{
        \vliin{}{(\mcut)}{\Omega,\mblock{\Phi}}{
        \vltr {\+E_1} {
            \dow\Omega,\mblock{\dow\Phi,\rig A}
            }{
            \vlhy {\quad}}
            {
            \vlhy {}}
            {
            \vlhy {\quad}}
        }
        {
            \vltr {\+E_2} {
                \Omega,\mblock{\lef A,\Phi}
            }{
            \vlhy {\quad}}
            {
            \vlhy {}}
            {
            \vlhy {\quad}} 
        }
        }
        $
        \end{tabular}
        \end{adjustbox}
    \end{center}

    The structure of the proof is as follows. We 
    consider different cases 
    according to the role of the cut formula in the premises of $(\cut)$ for the last step in $\+D_1$ and $\+D_2$, then similarly for the roles of the cut formula in the premises of $(\mcut)$ for the last step in $\+E_1$ and $\+E_2$. Our goal is to push up each application of $(\cut)$ and $(\mcut)$ to some upper applications with strictly smaller ranks. 

    First, for $\+D$, let us denote the last rule applied in $\+D_1$ and $\+D_2$ as $(r_1),(r_2)$ respectively. If either $\rig A$ is weak in $(r_1)$ or $\lef A$ is weak in $(r_2)$, we only need to apply the same rule with another context without that occurrence of $A$. Otherwise, neither $\rig A$ nor $\lef A$ is weak, then one of the following holds: 
    (D1) one of the premises is an axiom 
    namely the inductive base ($\cut$ applied at height 0); and for inductive step, 
    (D2) $A$ is principal in \textbf{both} premises; 
    (D3) $\rig A$ is principal in $(r_1)$ and $\lef A$ is side in $(r_2)$; 
    (D4) $\rig A$ is side in $(r_1)$. 

    Next, for $\+E$, similar to the first half of the proof, we discuss possible different roles of the cut formula in the premises of $(\mcut)$ for the last step in $\+E_1$ and $\+E_2$. We denote the last rule applied in $\+E_1$ and $\+E_2$ as $(e_1),(e_2)$ respectively. In addition to the weak cases, we have: 
    (E1) one of the premises is an axiom; 
    (E2) cut formula $A$ is principal in \textbf{both} premises; 
    (E3) $\rig A$ is principal in $(e_1)$ and $\lef A$ is side in $(e_2)$; 
    (E4) $\rig A$ is side in $(e_1)$. 

    Let us consider all the cases of (D1)-(D4) and (E1)-(E4) in turn. 

    \noindent \textbf{(D1)} one of the premises is an axiom, which implies one of the following:
        (1) $\dow\Omega$ or $\Omega$ is an axiom, so is the sequent on the conclusion;
        (2) $A=p$ and $\lef p\in \Omega$, then we obtain $\Omega$ just by contraction from the right premise $\lef A,\Omega$;
        (3) $A=p$ and $\rig p\in\Omega$, dual to the second case. 

        \noindent \textbf{(D2)} cut formula $A$ is principal in both premises, we consider the following non-trivial sub-cases:
    
    \begin{itemize}
    \item If $A= B\supset C$, 
    we have 
    \begin{center}
        \vspace{-.3cm}
        \begin{adjustbox}{max width=.92\textwidth}
        \begin{tabular}{ccc}
        $
        \vlderivation{
        \vliin{}{(\cut)}{\Omega}{
        \vlin{}{(\rig \supset)}{
            \dow\Omega,\rig{B\supset C}
            }{
                \vltr {\+D_1'} {
                    \dow\Omega,\lef B,\rig C
                    }{
                    \vlhy {\quad}}
                    {
                    \vlhy {}}
                    {
                    \vlhy {\quad}}
            }
        }
        {
            \vliin{}{(\lef \supset)}{
            \lef{B\supset C},\Omega
            }{
                \vltr {\+D_2'} {
                    \lef{B\supset C}, \dow\Omega,\rig B
                    }{
                    \vlhy {\quad}}
                    {
                    \vlhy {}}
                    {
                    \vlhy {\quad}}
            }
            {
                \vltr {\+D_2''} {
                    \lef C,\Omega
                    }{
                    \vlhy {\quad}}
                    {
                    \vlhy {}}
                    {
                    \vlhy {\quad}}
            }
        }
        }
        $
        & $\leadsto$ & 
        \AXC{$\dow\Omega,\rig{B\supset C}$}
        \AXC{$\lef{B\supset C},\dow\Omega,\rig B$}
        \RightLabel{$(\cut)$}
        \BIC{$\dow\Omega,\rig B$}
        \AXC{$\dow\Omega,\lef B,\rig C$}
        \RightLabel{$(\cut)$}
        \BIC{$\dow\Omega,\rig C$}
        \AXC{$\lef{C},\Omega$}
        \RightLabel{$(\cut)$}
        \BIC{$\Omega$}
        \DP
        \end{tabular}
    \end{adjustbox}
    \end{center}
    \item If $A=\square B$, 
    further assume $\Omega=\Omega',[\Phi]$, then we have 
    \begin{center}
        \vspace{-.5cm}
        \begin{adjustbox}{max width=.92\textwidth}
        \begin{tabular}{ccc}
        $
        \vlderivation{
        \vliin{}{(\cut)}{\Omega',\mblock{\Phi}}{
        \vlin{}{(\rig \square)}{
            \dow\Omega,\rig{\square B}
            }{
                \vltr {\+D_1'} {
                    \Omega^*,\mblock{\rig B}
                    }{
                    \vlhy {\quad}}
                    {
                    \vlhy {}}
                    {
                    \vlhy {\quad}}
            }
        }
        {
            \vlin{}{(\lef \square)}{
            \Omega',\lef{\square B},\mblock{\Phi}
            }{
                \vltr {\+D_2'} {
                    \Omega',\lef{\square B},\mblock{\lef B,\Phi}
                    }{
                    \vlhy {\quad}}
                    {
                    \vlhy {}}
                    {
                    \vlhy {\quad}}
            }
        }
        }
        $
        & $\leadsto$ & 
        \AXC{$\Omega^*,\mblock{\rig B}$}
        \AXC{$\dow{\Omega'},\mblock{\dow\Phi},\rig{\square B}$}
        \RightLabel{$(\lef w\upblock)$}
        \UIC{$\dow{\Omega'},\mblock{\lef B, \dow\Phi},\rig{\square B}$}
        \AXC{$\Omega',\lef{\square B},\mblock{\lef B,\Phi}$}
        \RightLabel{$(\cut)$}
        \BIC{$\Omega',\mblock{\lef B,\Phi}$}
        \RightLabel{$(\mcut)$}
        \BIC{$\Omega,\mblock{\Phi}$}
        \DP
        \end{tabular}
    \end{adjustbox}
    \end{center}
    \item If $A=\Diamond B$, assume $\Omega=\Omega',[\Phi]$ and the output formula of the conclusion is in $\Phi$, then we have
    \begin{center}
        \vspace{-.5cm}
        \begin{adjustbox}{max width=.92\textwidth}
        \begin{tabular}{ccc}
        $
        \vlderivation{
        \vliin{}{(\cut)}{
            \Omega',\mblock{\Phi}
        }{
        \vlin{}{(\rig \Diamond)}{
            \dow{\Omega'},\mblock{\dow\Phi},\rig {\Diamond B}
            }{
                \vltr {\+D_1'} {
                    \dow{\Omega'},\mblock{\dow\Phi,\rig B}
                    }{
                    \vlhy {\quad}}
                    {
                    \vlhy {}}
                    {
                    \vlhy {\quad}}
            }
        }
        {
            \vlin{}{(\lef \Diamond)}{
            \lef {\Diamond B},\Omega',\mblock{\Phi}
            }{
                \vltr {\+D_2'} {
                    \Omega',\mblock{\Phi},\mblock{\lef B}
                    }{
                    \vlhy {\quad}}
                    {
                    \vlhy {}}
                    {
                    \vlhy {\quad}}
            }
        }
        }
        $
            & $\leadsto$ & 
            \AXC{$\dow{\Omega'},\mblock{\dow\Phi,\rig B}$}
            \RightLabel{$(w^+\upblock)$}
            \UIC{$\dow{\Omega'},\mblock{\dow\Phi,\rig B},\mblock{\dow\Phi}$}
            \AXC{$\Omega',\mblock{\Phi},\mblock{\lef B}$}
            \RightLabel{$(\lef w)$}
            \UIC{$\Omega',\mblock{\Phi},\mblock{\dow\Phi,\lef B}$}
            \RightLabel{$(\mcut)$}
            \BIC{$\Omega',\mblock{\Phi^*},\mblock{\Phi}$}
            \RightLabel{$w\upblock^-$}
            \UIC{$\Omega',\mblock{\Phi}$}
            \DP
        \end{tabular}
        \end{adjustbox}
    \end{center}
    The other case when the output formula is in $\Omega'$ is similar. 
    \end{itemize}

    \noindent \textbf{(D3)} $\rig A$ is principal in $(r_1)$ while $\lef A$ is side in $(r_2)$. We have the following sub-cases:  
        \begin{enumerate}[label=(\roman*)]
            \item $r_2$ is one of the invertible rules except $\rig{\square_1}$. 
            As an example, we show the case when $r_2=\lef{\supset_1}$. 
            Suppose $\Omega=\lef{B\supset C}, \Omega'$, in this case, we have the following derivation unfolding $\+D$, 

            \begin{center}
                \vspace{-.3cm}
                $
                \vlderivation{
                \vliin{}{(\cut)}{
                    \lef{B\supset C}, \dow{\Omega'}
                }{
                    \vltr {\+D_1'} {
                        \lef{B\supset C}, \dow{\Omega'},\rig A      
                    }{
                    \vlhy {\quad}}
                    {
                    \vlhy {}}
                    {
                    \vlhy {\quad}} 
                }
                {
                    \vliin{}{(\lef{\supset_1})}{
                        \lef A,\lef{B\supset C}, \Omega'
                    }{
                    \vltr {\+D_{21}'}{
                        \lef A,\lef{B\supset C},\dow{\Omega'},\rig B
                    }{
                    \vlhy {\quad}}
                    {
                    \vlhy {}}
                    {
                    \vlhy {\quad}} 
                    }
                    {
                        \vltr {\+D_{22}'}{
                            \lef A,\lef C, {\Omega'}
                        }{
                        \vlhy {\quad}}
                        {
                        \vlhy {}}
                        {
                        \vlhy {\quad}} 
                    }
                }
                }
                $
            \end{center}
            Then we transform the derivation into the following
            \vspace{-.2cm}
            \begin{center}
            \begin{adjustbox}{max width=.93\textwidth}
                \AXC{$\lef{B\supset C}, \dow{\Omega'},\rig A$}
                \RightLabel{$(\lef w)$}
                \UIC{$\lef{B\supset C}, \lef{B\supset C}, \dow{\Omega'},\rig A$}
                \AXC{$\lef A,\lef{B\supset C},\dow{\Omega'},\rig B$}
                \RightLabel{$(\lef w)$}
                \UIC{$\lef A,\lef{B\supset C},\lef{B\supset C},\dow{\Omega'},\rig B$}
                \RightLabel{$(\cut)$}
                \BIC{$\lef{B\supset C},\dow{\Omega'},\rig B$}
                \AXC{$\lef{B\supset C}, \dow{\Omega'},\rig A$}
                \RightLabel{$(\lef w)$}
                \UIC{$\lef C, \lef{B\supset C}, \dow{\Omega'},\rig A$}
                \AXC{$\lef A,\lef C, {\Omega'}$}
                \RightLabel{$(\lef w)$}
                \UIC{$\lef A,\lef C, \lef{B\supset C}, {\Omega'}$}
                \RightLabel{$(\cut)$}
                \BIC{$\lef C, \lef{B\supset C}, {\Omega'}$}
                \RightLabel{$(\lef{\supset_1})$}
                \BIC{$\lef {B\supset C}, \lef{B\supset C}, {\Omega'}$}
                \RightLabel{$(\lef w)$}
                \UIC{$\lef{B\supset C}, {\Omega'}$}
                \DP
            \end{adjustbox}
            \end{center}
            \item suppose $r_2=\rig{\square_1}$ and $\Omega=\Omega',\rig{\square B}$. Hence $\dow\Omega=\Omega'$ and the right premise of cut is derived from $\lef A, \Omega',[\rig B]$.  We have the following sub-cases according to the shape of $A$, 
            \begin{itemize}
                \item $A=C\vee D$ and $r_1=\rig \vee$, 
                in this case, the left premise of $(\cut)$ is $                  \Omega_1,\rig {C\vee D}$ and is derived from $\Omega_1,\rig C$, 
                the right premise of $(\cut)$ is $\lef{C\vee D},\Omega_2',\rig{\square B}$. By Lemma \ref{inver-lemma}, both $\lef{C},\Omega_2',\rig{\square B}$ and $\lef{D},\Omega_2',\rig{\square B}$ are derivable at the same height. 
                Then the proof proceeds in the same way as in (D2).  $A=C\wedge D$ is similar. 
                \item $A=C\supset D$ and $r_1=\rig {\supset_1}$, we 
                apply $(\cut)$ to the premise of $(\rig{\square_1})$ and then apply $\rig{\square_1}$. 
                \item $A=\square C$ and $r_1=\rig{\square_1}$, 
                we have 
                \begin{center}
                \vspace{-.5cm}
                \begin{adjustbox}{max width=.85\textwidth}
                \begin{tabular}{ccc}
                $
                \vlderivation{
                \vliin{}{(\cut)}{
                    \Omega',\rig{\square B}
                }{
                    \vlin{(r_1)}{}{
                        \Omega',\rig {\square C}
                    }{
                    \vltr {\+D_1'} {
                        \Omega',\mblock{\rig C}
                    }{
                    \vlhy {\quad}}
                    {
                    \vlhy {}}
                    {
                    \vlhy {\quad}} 
                    }
                }
                {
                    \vlin{}{(\rig{\square_1})}{
                        \lef {\square C},\Omega',\rig{\square B}
                    }{
                    \vltr {\+D_2'} {
                        \lef {\square C},{\Omega'},\mblock{\rig B}
                    }{
                    \vlhy {\quad}}
                    {
                    \vlhy {}}
                    {
                    \vlhy {\quad}} 
                    }
                }
                }
                $
                & $\leadsto$ & 
                    \AXC{$\Omega',\mblock{\rig C}$}
                    \AXC{$\Omega',\rig{\square C}$}
                    \AXC{$\Omega',\lef{\square C},\mblock{\lef C,\rig B}$}
                    \RightLabel{$(\cut)$}
                    \BIC{$\Omega',\mblock{\lef C,\rig B}$}
                    \RightLabel{$(\mcut)$}
                    \BIC{$\Omega',\mblock{\rig B}$}
                    \RightLabel{$(\rig{\square_1})$}
                    \UIC{$\Omega',\rig{\square B}$}
                    \DP
                \end{tabular}
                \end{adjustbox}
                \end{center}
                \item $A=\Diamond C, r_1=\rig{\Diamond_1}$ and $\Omega'=\Omega'',\mblock{\Phi}$, since $\Omega=\Omega',\rig{\square B}$, we have 
                $\dow{\Omega}=\Omega'',\mblock{\Phi}$. 
                \begin{center}
                \vspace{-.3cm}
                \begin{adjustbox}{max width=.85\textwidth}
                \begin{tabular}{ccc}
                $
                \vlderivation{
                \vliin{}{(\cut)}{
                    \Omega'',\mblock{\Phi},\rig{\square B}
                }{
                    \vlin{(r_1)}{}{
                        \Omega'',\rig {\Diamond C},\mblock{\Phi}
                    }{
                    \vltr {\+D_1'} {
                        \Omega'',\mblock{\Phi,\rig C}
                    }{
                    \vlhy {\quad}}
                    {
                    \vlhy {}}
                    {
                    \vlhy {\quad}} 
                    }
                }
                {
                    \vlin{}{(\rig{\square_1})}{
                        \lef {\Diamond C},\Omega'',\mblock{\Phi},\rig{\square B}
                    }{
                    \vltr {\+D_2'} {
                        \lef {\Diamond C},{\Omega''},\mblock{\Phi},\mblock{\rig B}
                    }{
                    \vlhy {\quad}}
                    {
                    \vlhy {}}
                    {
                    \vlhy {\quad}} 
                    }
                }
                }
                $
                & $\leadsto$ & 
                    \AXC{$\Omega'',\mblock{\Phi,\rig C}$}
                    \RightLabel{$(w^+\upblock)$}
                    \UIC{$\Omega'',\mblock{\Phi},\mblock{\Phi,\rig C}$}
                    \AXC{$\Omega'',\mblock{\Phi},\rig{\square B},\mblock{\lef C}$}
                    \RightLabel{$(\lef w)$}
                    \UIC{$\Omega'',\mblock{\Phi},\rig{\square B},\mblock{\lef C,\Phi}$}
                    \RightLabel{$(\mcut)$}
                    \BIC{$\Omega'',\mblock{\Phi},\rig{\square B},\mblock{\Phi}$}
                    \RightLabel{$(c\upblock^-)$}
                    \UIC{$\Omega'',\mblock{\Phi},\rig{\square B}$}
                    \DP
            \end{tabular}
            \end{adjustbox}
            \end{center}
            \end{itemize}

            \item $r_2$ is one of the non-invertible rules, according to the form of rules, we see $r_2\in\{\rig{\Diamond},\rig{\vee_1},\rig{\vee_2}\}$. 
            Among all these three cases can be done by switching the order of $(e_2)$ and $(\cut)$, we only show the case when $r_2=\rig{\vee_2}$ as an example. Assume $\Omega=\Omega',\mblock{\Phi,\rig{B\vee C}}$. We have 
            \begin{center}
                \vspace{-.5cm}
                \begin{adjustbox}{max width=.95\textwidth}
                \begin{tabular}{ccc}
                    $
                    \vlderivation{
                    \vliin{}{(\cut)}{
                        \Omega',\mblock{\Phi,\rig{B\vee C}}
                    }{
                        \vlin{(e_1)}{}{
                            \Omega',\mblock{\Phi},\rig A
                        }{
                        \vltr {\+E_1'} {
                            \vdots
                        }{
                        \vlhy {\quad}}
                        {
                        \vlhy {}}
                        {
                        \vlhy {\quad}} 
                        }
                    }
                    {
                        \vlin{}{(\rig{\vee_2})}{
                            \lef A, \Omega',\mblock{\Phi,\rig{B\vee C}}
                        }{
                        \vltr {\+E_2'} {
                            \lef A,\Omega',\mblock{\Phi,\rig{B}}
                        }{
                        \vlhy {\quad}}
                        {
                        \vlhy {}}
                        {
                        \vlhy {\quad}} 
                        }
                    }
                    }
                    $
                    & $\leadsto$ & 
                    \AXC{$\Omega',\mblock{\Phi},\rig A$}
                \AXC{$\lef A,\Omega',\mblock{\Phi,\rig{B}}$}
                \RightLabel{$(\cut)$}
                \BIC{$\Omega',\mblock{\Phi,\rig{B}}$}
                \RightLabel{$(\rig{\vee_2})$}
                \UIC{$\Omega',\mblock{\Phi,\rig{B\vee C}}$}
                \DP
                \end{tabular}
                \end{adjustbox}
                \end{center}
        \end{enumerate}

    \noindent \textbf{(D4)} cut formula $\rig A$ is \textit{side} in the left premise. 
    This further implies the last rule $(r_1)$ applied in $\+D_1$ is one of the invertible $\bullet$-rules since other rules do not have side $\circ$-formulas (outside a block) on the conclusion. The proof is the same as the \textit{side} case (i) covered in \textbf{(D3)}. 

    \noindent \textbf{(E1)} 
    one of the premises is an axiom, 
    similar to the corresponding cases in (D1);

    \noindent \textbf{(E2)} cut formula $A$ is principal in both premises. Note that $\Phi$ is a simple sequent, namely block-free, so $A$ cannot be of the form $\square B$ or $\Diamond B$ 
    under the assumption of (E2). 
    $A= B\vee C$ or $B\wedge C$ is trivial, we only consider $A= B\supset C$, we have
    
    \begin{center}
            \vspace{-.8cm}
            \begin{adjustbox}{max width=\textwidth}
            \begin{tabular}{c}
            $
            \vlderivation{
            \vliin{}{(\mcut)}{\Omega,\mblock{\Phi}}{
            \vlin{}{(\rig {\supset_2})}{
                \dow\Omega,\mblock{\dow\Phi,\rig{B\supset C}}
                }{
                    \vltr {\+E_1'} {
                        \dow\Phi,\lef B,\rig C
                        }{
                        \vlhy {\quad}}
                        {
                        \vlhy {}}
                        {
                        \vlhy {\quad}}
                }
            }
            {
                \vliin{}{(\lef {\supset_2})}{
                \Omega,\mblock{\lef{B\supset C},\Phi}
                }{
                    \vltr {\+E_2'} {
                        \dow\Omega,\mblock{\lef{B\supset C},\dow\Phi, \rig B}
                        }{
                        \vlhy {\quad}}
                        {
                        \vlhy {}}
                        {
                        \vlhy {\quad}}
                }
                {
                    \vltr {\+E_2''} {
                        \Omega,\mblock{\lef{C},\Phi}
                        }{
                        \vlhy {\quad}}
                        {
                        \vlhy {}}
                        {
                        \vlhy {\quad}}
                }
            }
            }
            $
            \quad $\leadsto$\\ \\
            \AXC{$\dow\Omega,\mblock{\dow\Phi,\rig{B\supset C}}$}
            \AXC{$\dow\Omega,\mblock{\lef{B\supset C},\dow\Phi,\rig B}$}
            \RightLabel{$(\mcut)$}
            \BIC{$\dow\Omega,\mblock{\dow\Phi,\rig B}$}
            \AXC{$\Phi^*,\lef B,\rig C$}
            \RightLabel{$(\mblock{\cdot})$}
            \UIC{$\dow\Omega,\mblock{\Phi^*,\lef B,\rig C}$}
            \RightLabel{$(\mcut)$}
            \BIC{$\dow\Omega,\mblock{\Phi^*,\rig C}$}
            \AXC{$\Omega,\mblock{\lef{C},\Phi}$}
            \RightLabel{$(\mcut)$}
            \BIC{$\Omega,\mblock{\Phi}$}
            \DP
            \end{tabular}
        \end{adjustbox}
        \end{center}

    \noindent \textbf{(E3)} 
    $\rig A$ is principal in $(e_1)$ while $\lef A$ is \textit{side} in $(e_2)$. Similar to \textbf{(D3)}.

        \noindent \textbf{(E4)} $\rig A$ is \textit{side} in $(e_1)$. This implies the last rule $(e_1)$ applied in $\+E_1$ is one of the $\bullet$-invertible rules or is $(\lef{\Diamond_2})$. For the invertible rules, the proof is similar to the \textit{side} case (i) covered in \textbf{(D3)}. Here we only show the case when $e_1=\lef{\Diamond_2}$. In this case, $\Phi=\lef{\Diamond B}, \Phi'$
        and derivation $\+E$ is unfolded as
        \vspace{-.5cm}
        \begin{center}
            \AXC{$\vdots$}
            \LeftLabel{$e_1'$}
            \UIC{${\Phi'}^*,\mblock{\lef B},\rig A$}
            \LeftLabel{$e_1$}
            \UIC{$\dow\Omega,\mblock{{\Phi'}^*,\lef{\Diamond B}, \rig A}$}
            \AXC{$\vdots$}
            \LeftLabel{$e_2$}
            \UIC{$\Omega,\mblock{\lef A,\Phi',\lef{\Diamond B}}$}
            \BIC{$\Omega,\mblock{\Phi',\lef{\Diamond B}}$}
            \DP
        \end{center}
        We discuss the role of $\lef A$ in $(e_2)$ and the role of $\rig A$ in $(e_1')$. We have the following cases where neither $\lef A$ nor $\rig A$ is weak:
        \begin{enumerate}[label=(\roman*)]
            \item both $\lef A$ and $\rig A$ are \textit{principal}, then according to the form of rules, $A$ cannot be a $\square$-formula. Thus we have the following sub-cases:
            \begin{itemize}
                \item $A=C\wedge D$ or $C\vee D$. We only show the first one. In this case, we have 
                \begin{center}
                    \vspace{-.5cm}
                    \begin{adjustbox}{max width=.9\textwidth}
                    \begin{tabular}{c}                        
                    $
                    \vlderivation{
                    \vliin{}{(\mcut)}{
                        \Omega,\mblock{\Phi',\lef{\Diamond B}}
                    }{
                        \vlin{(e_1)}{}{
                            \dow\Omega,\mblock{\dow{\Phi'},\lef{\Diamond B},\rig {C\wedge D}}
                        }{
                        \vliin{(e_1')}{}{
                            \dow{\Phi'},\mblock{\lef B},\rig {C\wedge D}
                            }
                        {
                        \vltr {\+E_{11}'} {
                        \dow{\Phi'},\mblock{\lef B},\rig {C}
                        }{
                        \vlhy {\quad}}
                        {
                        \vlhy {}}
                        {
                        \vlhy {\quad}} 
                        }
                        {
                        \vltr {\+E_{12}'} {
                                \dow{\Phi'},\mblock{\lef B},\rig {D}
                                }{
                                \vlhy {\quad}}
                                {
                                \vlhy {}}
                                {
                                \vlhy {\quad}}    
                        }
                        }
                    }
                    {
                        \vlin{}{(e_2)}{
                            \Omega,\mblock{\lef {C\wedge D},\Phi',\lef{\Diamond B}}
                        }{
                        \vltr{\+E_2'} {
                            \Omega,\mblock{\lef {C},\lef D,\Phi',\lef{\Diamond B}}    
                                    }{
                                    \vlhy {\quad}}
                                    {
                                    \vlhy {}}
                                    {
                        \vlhy {\quad}} 
                        }
                    }
                    }
                    $
                    \quad $\leadsto$\\ \\
                    \AXC{$\dow{\Phi'},\mblock{\lef B},\rig {D}$}
                    \RightLabel{$(\lef{\Diamond_2})$}
                    \UIC{$\dow\Omega,\mblock{\dow{\Phi'},\lef{\Diamond B},\rig {D}}$}
                    \AXC{$\dow{\Phi'},\mblock{\lef B},\rig {C}$}
                    \RightLabel{$(\lef{\Diamond_2})$}
                    \UIC{$\dow\Omega,\mblock{\dow{\Phi'},\lef{\Diamond B},\rig {C}}$}
                    \RightLabel{$(\lef w)$}
                    \UIC{$\dow\Omega,\mblock{\dow{\Phi'},\lef{\Diamond B},\lef D,\rig {C}}$}
                    \AXC{$\Omega,\mblock{\lef {C},\lef D,\Phi',\lef{\Diamond B}}   $}
                    \RightLabel{$(\mcut)$}
                    \BIC{$\Omega,\mblock{\lef D,\Phi',\lef{\Diamond B}}$}
                    \RightLabel{$(\mcut)$}
                    \BIC{$\Omega,\mblock{\Phi',\lef{\Diamond B}}$}
                    \DP
                \end{tabular}
            \end{adjustbox}
            \end{center} 
                \item $A=C\supset D$. In this case, derivation $\+E$ is unfolded as
                \begin{center}
                    \vspace{-.5cm}
                    $
                    \vlderivation{
                    \vliin{}{(\mcut)}{
                        \Omega,\mblock{\Phi',\lef{\Diamond B}}
                    }{
                        \vlin{(e_1)}{}{
                            \dow\Omega,\mblock{\dow{\Phi'},\lef{\Diamond B},\rig {C\supset D}}
                        }{
                        \vlin{(e_1')}{}{
                            \dow{\Phi'},\mblock{\lef B},\rig {C\supset D}
                            }
                        {
                        \vltr {\+E_{1}'} {
                        \dow{\Phi'},\mblock{\lef B},\lef C,\rig {D}
                        }{
                        \vlhy {\quad}}
                        {
                        \vlhy {}}
                        {
                        \vlhy {\quad}} 
                        }
                        }
                    }
                    {
                        \vliin{}{(e_2)}{
                            \Omega,\mblock{\lef {C\supset D},\Phi',\lef{\Diamond B}}
                        }{
                        \vltr{\+E_{21}'} {
                            \dow\Omega,\mblock{\lef {C\supset D},\dow{\Phi'},\lef{\Diamond B},\rig C}    
                                    }{
                                    \vlhy {\quad}}
                                    {
                                    \vlhy {}}
                                    {
                        \vlhy {\quad}} 
                        }
                        {
                            \vltr{\+E_{22}'} {
                                \Omega,\mblock{\lef D,\Phi',\lef{\Diamond B}}    
                                        }{
                                        \vlhy {\quad}}
                                        {
                                        \vlhy {}}
                                        {
                            \vlhy {\quad}}    
                        }
                    }
                    }
                    $
                \end{center}
                and we transform it into
                \begin{center}
                \begin{adjustbox}{max width=.85\textwidth}
                    \AXC{$\dow\Omega,\mblock{\dow{\Phi'},\lef{\Diamond B},\rig {C\supset D}}$}
                    \AXC{$\dow\Omega,\mblock{\lef {C\supset D},\dow{\Phi'},\lef{\Diamond B},\rig C}$}
                    \RightLabel{$(\mcut)$}
                    \BIC{$\dow\Omega,\mblock{\dow{\Phi'},\lef{\Diamond B},\rig C}$}
                    \AXC{$\dow{\Phi'},\mblock{\lef B},\lef C,\rig {D}$}
                    \RightLabel{$(\lef{\Diamond_2})$}
                    \UIC{$\dow\Omega,\mblock{\dow{\Phi'},\lef{\Diamond B},\lef C,\rig { D}}$}
                    \RightLabel{$(\mcut)$}
                    \BIC{$\dow\Omega,\mblock{\dow{\Phi'},\lef{\Diamond B},\rig {D}}$}
                    \AXC{$\Omega,\mblock{\lef D,\Phi',\lef{\Diamond B}}$}
                    \RightLabel{$(\mcut)$}
                    \BIC{$\Omega,\mblock{{\Phi'},\lef{\Diamond B}}$}
                    \DP
                \end{adjustbox}
                \end{center}
                \item $A=\Diamond C$. In this case, we have 
                \begin{center}
                    \vspace{-.3cm}
                    \begin{adjustbox}{max width=.85\textwidth}
                    \begin{tabular}{ccc}                        
                    $
                    \vlderivation{
                    \vliin{}{(\mcut)}{
                        \Omega,\mblock{\Phi',\lef{\Diamond B}}
                    }{
                        \vlin{(e_1)}{}{
                            \dow\Omega,\mblock{\dow{\Phi'},\lef{\Diamond B},\rig {\Diamond C}}
                        }{
                        \vlin{(e_1')}{}{
                            \dow{\Phi'},\mblock{\lef B},\rig {\Diamond C}
                            }
                        {
                        \vltr {\+E_{1}'} {
                        \dow{\Phi'},\mblock{\lef B,\rig C}
                        }{
                        \vlhy {\quad}}
                        {
                        \vlhy {}}
                        {
                        \vlhy {\quad}} 
                        }
                        }
                    }
                    {
                        \vlin{}{(e_2)}{
                            \Omega,\mblock{\lef {\Diamond C},\Phi',\lef{\Diamond B}}
                        }{
                        \vltr{\+E_{2}'} {
                            \mblock{\lef C}, {\Phi'}^+, \lef{\Diamond B}    
                                    }{
                                    \vlhy {\quad}}
                                    {
                                    \vlhy {}}
                                    {
                        \vlhy {\quad}} 
                        }
                    }
                    }
                    $
                    & $\leadsto$ & 
                    \AXC{$\dow{\Phi'},\mblock{\lef B,\rig C}$}
                    \RightLabel{$(\lef w)$}
                    \UIC{$\dow{\Phi'},\mblock{\lef B,\rig C},\lef{\Diamond B}$}
                    \AXC{$\mblock{\lef C}, {\Phi'}^+, \lef{\Diamond B}$}
                    \RightLabel{$(\lef w)$}
                    \UIC{$\mblock{\lef C,\lef B}, {\Phi'}^+, \lef{\Diamond B}$}
                    \RightLabel{$(\mcut)$}
                    \BIC{$\mblock{\lef B}, \Phi'^+, \lef{\Diamond B}$}
                    \RightLabel{$(\lef{\Diamond_2})$}
                    \UIC{$\Omega,\mblock{\lef{\Diamond B}, \Phi',\lef{\Diamond B}}$}
                    \RightLabel{$(\lef c)$}
                    \UIC{$\Omega,\mblock{\Phi',\lef{\Diamond B}}$}
                    \DP
                    \end{tabular}
                \end{adjustbox}
            \end{center}
            \end{itemize} 
            \item $\lef A$ is \textit{side} in $(e_2)$, then according to the form of rules, $e_2$ can be any rule of the calculus. 
            \begin{itemize}
                \item if $e_2$ is an invertible rule or $\rig{\Diamond},\rig{\vee}$, we switch the order of $(e_2)$ and $(\mcut)$, that is, apply $(\mcut)$ to the premise of $(e_2)$ and the conclusion of $(e_1')$ first, then apply $(e_2)$. 
                \item if $e_2\in\{\rig{\supset_2},\rig{\square_2},\lef{\Diamond_2}\}$, the three cases are similar and we only show one here. Assume $\Phi'=\Psi,\rig{C\supset D}$, then $\Omega^*=\Omega$, $\dow{\Phi'}=\Psi$, 
                and we have 
                \begin{center}
                    \vspace{-.8cm}
                    \begin{adjustbox}{max width=.9\textwidth}
                    \begin{tabular}{ccc}                        
                    $
                    \vlderivation{
                    \vliin{}{(\mcut)}{
                        \Omega,\mblock{\Psi,\lef{\Diamond B},\rig{C\supset D}}
                    }{
                        \vlin{(e_1)}{}{
                            \Omega,\mblock{\Psi,\lef{\Diamond B},\rig A}
                        }{
                        \vlin{(e_1')}{}{\Psi,\mblock{\lef B},\rig A}
                        {
                        \vltr {\+E_1'} {
                        \vdots
                        }{
                        \vlhy {\quad}}
                        {
                        \vlhy {}}
                        {
                        \vlhy {\quad}} 
                        }
                        }
                    }
                    {
                        \vlin{}{(e_2)}{
                            \Omega,\mblock{\lef A,\Psi,\lef{\Diamond B},\rig{C\supset D}}
                        }{
                        \vltr{\+E_2'} {
                            \lef A,\Psi,\lef{\Diamond B},\lef C, \rig{D}    
                                    }{
                                    \vlhy {\quad}}
                                    {
                                    \vlhy {}}
                                    {
                        \vlhy {\quad}} 
                    }
                    }
                    }
                    $
                    & $\leadsto$ & 
                    \AXC{$\Psi,\mblock{\lef B},\rig A$}
                    \RightLabel{$(\lef w)$}
                    \UIC{$\Psi,\mblock{\lef B},\lef{\Diamond B}, \lef C, \rig A$}
                    \AXC{$\lef A,\Psi,\lef{\Diamond B},\lef C, \rig{D}$}
                    \RightLabel{$(w^+\upblock)$}
                    \UIC{$\lef A,\Psi,\mblock{\lef B},\lef{\Diamond B},\lef C, \rig{D}$}
                    \RightLabel{$(\cut)$}
                    \BIC{$\Psi,\mblock{\lef B},\lef{\Diamond B},\lef C, \rig{D}$}
                    \RightLabel{$(\lef{\Diamond_1})$}
                    \UIC{$\Psi,\lef{\Diamond B},\lef{\Diamond B},\lef C, \rig{D}$}
                    \RightLabel{$(\lef c)$}
                    \UIC{$\Psi,\lef{\Diamond B},\lef C, \rig{D}$}
                    \RightLabel{$(\rig{\supset_2})$}
                    \UIC{$\Omega,\mblock{\Psi,\lef{\Diamond B},\rig{C\supset D}}$}
                    \DP
                \end{tabular}
                \end{adjustbox}
                \end{center}
            \end{itemize}
            \item $\rig A$ is \emph{side} in $(e_1')$ and $\lef A$ is \textit{principal} in $(e_2)$. Similar to (iii), $A$ cannot be a $\square$-formula. For cases when $A=C\wedge D$ or $C\supset D$, the proof is the same to (iii), since the corresponding $\circ$-rules with respect to $\rig A$ are invertible. Thus we only have two cases remaining: $A=\Diamond D$ or $A=C\vee D$. We show the former case as the latter is similar. 
                
            In this case, the derivation $\+E$ is unfolded as follows, where $\Psi=\Phi'$, 
            \begin{center}
            \vspace{-.5cm}
            \AXC{$\vdots$}
            \LeftLabel{$e_1'$}
            \UIC{$\dow\Psi,\mblock{\lef B},\rig{\Diamond D}$}
            \LeftLabel{$e_1$}
            \UIC{$\dow\Omega,\mblock{\dow\Psi,\lef{\Diamond B}, \rig{\Diamond D}}$}
            \AXC{$\vdots$}
            \UIC{$\mblock{\lef D},\Psi^+,\lef{\Diamond B}$}
            \RightLabel{$e_2$}
            \UIC{$\Omega,\mblock{\lef{\Diamond D},\Psi,\lef{\Diamond B}}$}
            \RightLabel{$\mcut$}
            \BIC{$\Omega,\mblock{\Psi,\lef{\Diamond B}}$}
            \DP        
            \end{center}
                    
                    \noindent (a) If $e_1'=\lef{\Diamond_2}$, the conclusion of $\cut$ follows from the premise of $(e_1')$ by $(\lef{\Diamond_2})$ twice. 
                    
                    \noindent (b) If $e_1'$ is one of the $\bullet_1$-rules except $\lef{\Diamond_1}$, then 
                    from the premise of $(e_1')$, we apply $(e_1)$ first, then $(\mcut)$ to the conclusion of $(e_1)$ with the conclusion of $(e_2)$, and the $\bullet_2$-counterpart of $(e_1')$ in the end. 
                    
                    \noindent 
                    (c) If $e_1'$ is $\rig{\Diamond}$, then $\Psi,\mblock{\lef B,\rig D}$ is derivable, we transform the whole derivation into
                    \vspace{-.3cm}
                    \begin{prooftree}
                        \AXC{$\Psi,\mblock{\lef B,\rig D}$}
                        \RightLabel{$(\lef w)$}
                        \UIC{$\Psi,\lef{\Diamond B}, \mblock{\lef B,\rig D}$}
                        \AXC{$\mblock{\lef D}, \Psi^+, \lef{\Diamond B}$}
                        \RightLabel{$(\lef w)$}
                        \UIC{$\mblock{\lef D,\lef B}, \Psi^+, \lef{\Diamond B}$}
                        \RightLabel{$(\mcut)$}
                        \BIC{$\mblock{\lef B}, \Psi^+, \lef{\Diamond B}$}
                        \RightLabel{$(\lef{\Diamond_2})$}
                        \UIC{$\Omega,\mblock{\Psi,\lef{\Diamond B},\lef{\Diamond B}}$}
                        \RightLabel{$(\lef{c})$}
                        \UIC{$\Omega,\mblock{\Psi,\lef{\Diamond B}}$}
                    \end{prooftree}
                    
                    \noindent (d) If $e_1'$ is one of the $\bullet_2$-rules or $\lef{\Diamond_1}$, let us consider the rules applied above $e_1'$. Assume that before any $\circ$-rule is applied, there are no $\bullet_1$-rules applied above $e_1'$, 
                    otherwise we return to case (a), switching the rule order and simulating the derivation with the $\bullet_2$-counterpart as above. Let us enumerate the branches of $\+E_1'$ as $\+B_1,\ldots,\+B_k$ and for each $\+B_i$ further enumerate sequents contained in it as $\Phi_{i_0},\ldots,\Phi_{i_j},\ldots,\Phi_{i_m}$ such that $\Phi_{i_0}=\dow{\Phi'},\mblock{\lef B},\rig {\Diamond D}$ and $\Phi_{i_m}$ is an axiom.   
                    Then for each $\+B_i$, it implies one of the following situations regarding the output formula: 
                    \begin{enumerate}
                        \item[(h1)] $\rig{\Diamond D}$ is in each $\Phi_{i_j}$. 
                        \item[(h2)] there is some $i_k\geq 1$ such that 
                        $\Phi_{i_k}=\Theta,\mblock{\Lambda},\rig{\Diamond D}$ for some $\Theta,\Lambda$ and $\Phi_{i_{k+1}}=\Theta,\mblock{\Lambda,\rig D}$; meanwhile, for any $i_j\leq i_k$, the output formula of $\Phi_{i_j}$ is $\rig{\Diamond D}$. 
                        \item[(h3)] there is some $i_k\geq 1$ such that $\Phi_{i_k}=\Theta,\mblock{\Psi,\lef{P\supset Q}},\rig{\Diamond D}$ for some $\Theta,\Psi,P,Q$ and $\Phi_{i_{k+1}}=\Theta,\mblock{\Psi,\lef{P\supset Q},\rig P}$; meanwhile, for any $i_j\leq i_k$, the output formula of $\Phi_{i_j}$ is $\rig{\Diamond D}$. 
                    \end{enumerate}
                    We process each $\+B_i$ respectively and construct the  corresponding $\+B_i'$ for it. 
                    If (h1) holds, this means $\rig{\Diamond D}$ is unused and hence side in the whole branch, so we can just replace each occurrence of $\rig D$ in $\+B_i$ with the output formula $\rig F$ of $\Psi^+$ 
                    and then get $\+B_i'$. 
                    If (h2) holds, we reconstruct $\+B_i$ into $\+B_i'$ as follows. 
                    First we keep all the components $\Phi_{i_j}$'s for $j>k+1$. 
                    Next, we remove $\Phi_{i_k}$; and as in (b), cut $\Phi_{i_{k+1}}$ with the premise of $(e_2)$ and obtain $\Phi_{i_k}'$ as follows
                    \vspace{-.2cm}
                    \begin{prooftree}
                        \AXC{$\Theta,\mblock{\Lambda,\rig D}$}
                        \RightLabel{$(\lef w)$}
                        \UIC{$\Psi,\lef{\Diamond B}, \Theta, \mblock{\Lambda,\rig D}$}
                        \AXC{$\mblock{\lef D}, \Psi^+, \lef{\Diamond B}$}
                        \RightLabel{$(\lef w)$}
                        \UIC{$\mblock{\Lambda,\lef D}, \Psi^+, \lef{\Diamond B},\Theta$}
                        \RightLabel{$(\mcut)$}
                        \BIC{$\Psi^+, \lef{\Diamond B},\Theta,\mblock{\Lambda}$}
                    \end{prooftree}
                    Then for all the $\Phi_{i_j}$'s with $j<k$, let $\Phi'_{i_j}=\Psi,\lef{\Diamond B}, \dow\Phi_{i_j},\rig F$. 
                    
                    Lastly, if (h3) holds, to construct $\+B_i'$ from the current $\+B$, we keep all the components $\Phi_{i_j}$'s for $j>k$ and for $j\leq k$, we replace $\rig {\Diamond D}$ with $\rig F$, which is the output formula of the final conclusion of $(\mcut)$. Thus, for instance, $\Phi_{i_k}'=\Theta,[\Psi,\lef{P\supset Q}],\rig F$. 
                    Note that the way we process each branch $\+B$ in fact 
                    eliminates the cut formula $\rig{\Diamond D}$ in the whole branch by substituting with the output formula at certain points, 
                    but this does not influence other rule applications in the derivation, thus we can connect each $\+B_i'$ and reconstruct the derivation with contraction ending with $\Psi^+,\lef{\Diamond B},[\lef B]$. Then by $(\lef {\Diamond_2})$ and contraction, we conclude $\Omega,[\Psi,\lef{\Diamond B}]$. 
        \end{enumerate}
    This completes the whole proof of cut-admissibility. 
\end{proof}

By the previous theorem we finally obtain the completeness of \lcfik. 

\begin{theorem}[Completeness of \lcfik]
    If a formula $A$ is provable in $\hfik$, then it is provable in \lcfik. 
\end{theorem}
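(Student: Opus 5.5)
The completeness of \lcfik follows by chaining the two preceding results. Let $A$ be provable in $\hfik$.

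First, by the Completeness of \lcfik$+(\cut)+(\mcut)$ theorem, the sequent $\rig A$ has a derivation $\+D$ in \lcfik$+(\cut)+(\mcut)$. That theorem only requires that each axiom instance of $\hfik$ be derivable, that $(\modp)$ be simulated by $(\cut)$, and that $(\nec)$ be simulated through the lemma that provability of $\rig A$ gives provability of $[\rig A]$, followed by $(\rig{\square_1})$.

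Second, we eliminate every cut from $\+D$ using the Cut-admissibility theorem. We proceed by induction on the number of applications of $(\cut)$ and $(\mcut)$ in $\+D$. At each step we choose a topmost cut application. Both of its premises then have cut-free derivations in \lcfik. Admissibility of $(\cut)$ and $(\mcut)$ in \lcfik turns this into a cut-free derivation of the conclusion. Substituting that derivation back into $\+D$ strictly decreases the number of cut applications. In fact the theorem as stated already covers whole derivations in \lcfik$+\cut+\mcut$, so it can be applied to $\+D$ directly.

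The result is a derivation of $\rig A$ in \lcfik, which is exactly what it means for $A$ to be provable in \lcfik.

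The only point needing care is that the notion of provability agrees in both systems, namely derivability of the simple sequent $\rig A$, and this holds by definition. The main obstacle, cut elimination itself, is handled entirely by the Cut-admissibility theorem, so no further difficulty arises. Combined with Theorem~\ref{thm:soundness}, this also yields that \lcfik exactly captures \fik-validity.
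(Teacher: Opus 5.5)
Your proposal is correct and matches the paper: it obtains this theorem in one line by combining the completeness of $\mathsf{SC}_\mathbf{FIK}+(\cut)+(\mcut)$ with respect to $\hfik$ with the cut-admissibility theorem. Your topmost-cut induction is harmless but unnecessary. As you note yourself, cut admissibility is already stated for arbitrary derivations in $\mathsf{SC}_\mathbf{FIK}+(\cut)+(\mcut)$.
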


\section{Proof search and complexity}\label{sec:termination-complexity}
In this section we show that the decision problem of \fik is in \textsc{Expspace} by means of proof search in the calculus \lcfik with some refinement. Before presenting the main result, we can easily 
show the calculus \lcfik provides a decision problem with some rough upper bound by just minimal means. 

To obtain decidability and a loose upper bound, we proceed as follows. We know by proposition \ref{prop:admissibility-weakening-contraction} that contraction and weakening are admissible, hence we can restrict our attention to `set-sequents' in the following sense: 
\begin{definition}[(quasi-)set-sequents]
    Let $\Omega$ be a sequent of the form $\Phi, [\Psi_1], \ldots, [\Psi_1]$. 
    $\Omega$ is called a \emph{quasi-set-sequent} 
    if each $\Phi, \Psi_i$ is a set (rather than a multiset) of formulas; 
    $\Omega$ is further called a \emph{set-sequent} if it is a quasi-set-sequent 
    and for $i\not= j$, $\Psi_i \not= \Psi_j$. 
\end{definition}
Set-sequents in this sense do not contain duplicated formulas in the same `context' (block/top-level sequent), nor duplicated blocks. 
It is then easy to prove that 
given a formula $\rig A$ (as a root of a derivation), there are only finitely many \emph{distinct} set-sequents that can occur in any derivation of $A$, no matter what are the rules of the calculus: 

\begin{proposition}\label{prop:max-block}
	Let $A$ be a formula and $|A|=\+O(n)$. Then the size of any  set-based sequent that may occur in any possible derivation of $A$ is $2^{\+O(n)}$, whence the set of all possible sequents that can occur in any possible derivation has cardinality $2^{2^{\+O(n)}}$. 
\end{proposition}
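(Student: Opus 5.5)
The argument rests on a subformula property combined with a counting bound. First I would check, by inspecting each rule of Figure \ref{fig:shallow-fik}, that \lcfik is analytic. Every formula occurring (with either polarity) in a premise is a subformula of some formula occurring in the conclusion. Rules such as $(\lef\supset)$, $(\lef\square)$ and $(\rig\Diamond)$ only copy, decompose or move existing formulas, and the $+$-operator only introduces $\rig\bot$. By induction from the root $\rig A$ upward, every formula in any sequent of a derivation of $A$ therefore belongs to $\subfm{A}\cup\{\bot\}$. Since $|A|=\+O(n)$, this set has at most $\+O(n)$ elements. Call it $S$, with $|S|=k=\+O(n)$.

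Next I would bound the size of a single set-sequent $\Omega=\Phi,[\Psi_1],\ldots,[\Psi_l]$ built from $S$:
\begin{itemize}
\item Each component $\Phi$ or $\Psi_i$ is a set of annotated formulas from $S$, with at most one output formula. So each component is determined by a subset of $S$ (the inputs) together with an optional output, giving at most $2^{k}(k+1)=2^{\+O(n)}$ possible components.
\item Each component has size at most $\sum_{B\in S}|B|+\+O(n)\le \+O(n^2)$.
\item Since blocks are pairwise distinct in a set-sequent, $l\le 2^{\+O(n)}$.
\end{itemize}
It follows that $|\Omega|\le \+O(n^2)\cdot(2^{\+O(n)}+1)=2^{\+O(n)}$.

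For the cardinality, a set-sequent is determined by its top-level component together with the set of its blocks. The set of blocks is a subset of the $2^{\+O(n)}$ possible components, so there are at most $2^{2^{\+O(n)}}$ choices for it, and at most $2^{\+O(n)}$ choices for the top level. Multiplying gives $2^{2^{\+O(n)}}$ distinct set-sequents in total.

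The main obstacle is justifying that derivations can be restricted to set-sequents at all. Rules like $(\lef\square)$ and $(\rig\Diamond)$ keep the principal formula, and weakenings such as $(w\upblock^+)$ may create duplicate blocks, so a naive derivation need not stay within set-sequents. I would handle this with the hp-admissibility of contraction, $(\lef c)$, and block contraction $(c\upblock^-)$ from Proposition \ref{prop:admissibility-weakening-contraction}. Whenever a rule application produces a duplicate formula in some component, or two blocks with identical input parts, I contract it immediately. This keeps every sequent a set-sequent. One point needs care: $(c\upblock^-)$ only merges a block $[\Psi]$ with $[\dow\Psi]$, so identical input blocks merge as required, and at most one block can differ by carrying the output. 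This affects the block bound only by a factor of $2$, which is harmless.
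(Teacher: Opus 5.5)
Your proof is correct and follows essentially the same counting argument as the paper, which simply observes that every set-sequent lies in $\mathcal{P}(\subfm{A})\times\mathcal{P}(\mathcal{P}(\subfm{A}))$. Your version is more careful: it makes the subformula property explicit and accounts for the output formula and for the $\rig\bot$ introduced by $\Phi^+$ in $(\lef\Diamond_2)$, all of which the paper leaves implicit. One small slip in your final paragraph, which goes beyond the statement anyway: $(\rig\Diamond)$ does not keep $\rig{\Diamond A}$ in its premise, since the output formula moves into the block, but this does not affect the argument.
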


\begin{proof}
	Each set-sequent $\Omega$ is a member of the set $\+P(\subfm{A})\times \+P(\+P(\subfm{A}))$ 
	Let $|A|=\+O(n)$, we have $|\subfm{A}|=\+O(n)$, whence the size of $\Omega$ is $2^{\+O(n)}$  and therefore the cardinality of the sets of such sequents is the cardinality of the set $\+P(\subfm{A})\times \+P(\+P(\subfm{A}))$ which is $2^{2^{\+O(n)}}$. 
\end{proof}

From this result we can  obtain directly a decision procedure and an upper bound for it: given a formula $\rig A$ as root,  we carry on backward proof search in \lcfik storing one branch of a derivation at a time with the following provisos: 
(i) we do not apply rules to an axiom, (ii) we remove duplicated formulas or blocks in order to keep the set-based structure, (iii) we stop proof-search if the same sequent already appears in the branch. By the previous result it follows that the size of each branch is bounded by $2^{2^{\+O(n)}}$. Whence the decision procedure runs in \textsc{2-Expspace}. 
Notice that this bound, although elevated is elementary whence already much lower than what is conjectured for \ik. 

The rest of the section is devoted to refine this upper bound to \textsc{Expspace}. 
Although as many as $2^{2^{\+O(n)}}$ set sequents can theoretically occur in a proof branch (by Proposition \ref{prop:max-block}), we aim to show that only $2^{\+O(n)}$ of them can actually occur.
Our plan is as follows: 
We first define  a variant of  \lcfik called \cumlcfik based on quasi-set-sequents  that  preserves the admissibility of structural rules. 
Then we show that if a quasi-set-sequent is provable in \lcfik then it is provable in \cumlcfik. 
Next we refine the notion of set-sequents to \emph{tight sequents} which are set-sequents with an additional restriction.  
We show if a formula is provable in \cumlcfik then it has a \emph{clean} proof where any sequent is a tight sequent. Finally we define a suitable proof search strategy for finding clean proofs for formulas and show that the proof search procedure runs in \textsc{Expspace}. 

We define the calculus \cumlcfik for \fik based on quasi-set-sequents by modifying some of the rules in \lcfik. The modified rules are found in Figure \ref{fig:shallow-fik-cumulative} and for other rules such as most $\circ$-rules except $\rig\supset$ as well as $\lef\square$, they keep the same form as in \lcfik. 
Comparing with \lcfik there are a few changes. 
First, the input part of each rule now becomes \emph{locally} cumulative by repeating the principal formula (if it is a $\bullet$-formula) with respect to the corresponding block/top-level sequent. Second, as part of the machinery to avoid redundant rule applications, we divide both $\rig\supset_1$ and $\rig\supset_2$ into two sub-rules according to whether the premise $A$ of the principal formula $\rig {A\supset B}$ already occurs on formula part of the conclusion. 

\begin{figure}[!t]
    \begin{framed}
    \begin{center}
        \begin{adjustbox}{max width = \textwidth}
            \begin{tabular}{cc}
                \AxiomC{$\Omega,\lef {A\wedge B},\lef A,\lef B$}
                \RightLabel{($\lef \wedge_1$)}
                \UnaryInfC{$\Omega,\lef {A\wedge B}$}
                \DisplayProof
                &
                \AxiomC{$\Omega,\mblock{\Phi,\lef A,\lef B,\lef {A\wedge B}}$}
                \RightLabel{($\lef \wedge_2$)}
                \UnaryInfC{$\Omega,\mblock{\Phi,\lef {A\wedge B}}$}
                \DisplayProof
                \\[.4cm]
                \AxiomC{$\Omega,\lef A,\lef{A\vee B}$}
                \AxiomC{$\Omega,\lef B,\lef{A\vee B}$}
                \RightLabel{($\lef \vee_1$)}
                \BinaryInfC{$\Omega,\lef{A\vee B}$}
                \DisplayProof
                &
                \AxiomC{$\Omega,\mblock{\Phi,\lef A,\lef{A\vee B}}$}
                \AxiomC{$\Omega,\mblock{\Phi,\lef B,\lef{A\vee B}}$}
                \RightLabel{($\lef \vee_2$)}
                \BinaryInfC{$\Omega,\mblock{\Phi,\lef{A\vee B}}$}
                \DisplayProof
                \\[.4cm]
                \AxiomC{$\Omega^*,\lef{A\supset B},\rig A$}
                \AxiomC{$\Omega,\lef B,\lef{A\supset B}$}
                \RightLabel{($\lef \supset_1$)}
                \BinaryInfC{$\Omega, \lef{A\supset B}$}
                \DisplayProof
                &
                \AxiomC{$\Omega^*,\mblock{\dow\Phi,\lef{A\supset B},\rig A}$}
                \AxiomC{$\Omega,\mblock{\Phi,\lef B,\lef{A\supset B}}$}
                \RightLabel{($\lef \supset_2$)}
                \BinaryInfC{$\Omega,\mblock{\Phi, \lef{A\supset B}}$}
                \DisplayProof
                \\[.4cm]
                \AXC{$\Omega,\lef A,\rig B$}
                \RightLabel{$(\rig \supset_1), A\notin\fm{\Omega}$}
                \UIC{$\Omega,\rig{A\supset B}$}
                \DP
                \quad
                \AXC{$\Omega,\rig B$}
                \RightLabel{$(\rig \supset_1), A\in\fm{\Omega}$}
                \UIC{$\Omega,\rig{A\supset B}$}
                \DP
                &
                \AXC{$\Phi,\lef A,\rig B$}
                \RightLabel{$(\rig \supset_2), A\notin \Phi$}
                \UIC{$\Omega,\mblock{\Phi,\rig{A\supset B}}$}
                \DP
                \quad
                \AXC{$\Phi,\rig B$}
                \RightLabel{$(\rig \supset_2), A\in \Phi$}
                \UIC{$\Omega,\mblock{\Phi,\rig{A\supset B}}$}
                \DP
                \\[.4cm]
                \AXC{$\Omega,\lef {\Diamond A},\mblock{\lef A}$}
                \RightLabel{$(\lef \Diamond_1)$}
                \UIC{$\Omega,\lef {\Diamond A}$}
                \DP
                &
                \AXC{$\Phi^+,\lef {\Diamond A},\mblock{\lef A}$}
                \RightLabel{$(\lef \Diamond_2)$}
                \UIC{$\Omega,\mblock{\Phi,\lef {\Diamond A}}$}
                \DP
            \end{tabular}
        \end{adjustbox}
    \end{center}
    \caption{\cumlcfik, based on quasi-set-sequents}
    \label{fig:shallow-fik-cumulative}
    \end{framed}
\end{figure}

The following proposition is obtained directly by contraction in \lcfik: 

\begin{proposition}
    If a quasi-set-sequent $\Omega$ is provable in \lcfik then it is provable in \cumlcfik. 
\end{proposition}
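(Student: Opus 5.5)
We argue by induction on the height $n$ of a proof of $\Omega$ in \lcfik, proving a slightly stronger statement that survives the induction. \emph{If $\vdash_n \Omega$ in \lcfik, then every quasi-set-sequent $\Omega'$ obtained from $\Omega$ by contracting duplicated input formulas inside each component and then possibly weakening with input formulas already occurring elsewhere in that component is provable in \cumlcfik.} In the simplest reading, $\Omega'$ is the quasi-set-sequent whose components have the same underlying sets as those of $\Omega$. Since $\Omega$ is already a quasi-set-sequent, the proposition is the special case $\Omega'=\Omega$.

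\textbf{Base case.} Axioms are closed under this set-reduction: $\lef\bot$ or a pair $\lef p,\rig p$ remains in the same component. So $\Omega'$ is again an initial sequent of \cumlcfik, whose initial sequents coincide with those of \lcfik.

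\textbf{Inductive step.} Let $(r)$ be the last rule of the proof of $\Omega$, with premises $S_i$. We apply the corresponding \cumlcfik rule to $\Omega'$ and obtain premises $S_i''$ that differ from the set-reductions $S_i'$ of the $S_i$ in one way only: the principal input formula is retained. For $\lef\wedge$, $\lef\vee$, $\lef\supset$ and $\lef\Diamond$ the cumulative premise is $S_i$ with the principal formula added back. By hp-admissibility of weakening, $S_i$ with $\lef{A\wedge B}$ (etc.) added is provable in \lcfik at the same height. Its set-reduction is exactly the cumulative premise, so the induction hypothesis applies.

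The $\rig\supset$ rules need a small split.
\begin{itemize}
\item If $A\notin\fm{\Omega}$ (resp.\ $A\notin\Phi$), the cumulative rule coincides with the original and we use the induction hypothesis directly.
\item If $A$ already occurs, the \lcfik premise $\Omega,\lef A,\rig B$ set-reduces to $\Omega',\rig B$, since the extra copy of $\lef A$ is contracted. Again the induction hypothesis gives the required premise.
\end{itemize}

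The remaining rules are identical in both calculi up to set-reduction of the contexts:
\begin{itemize}
\item $\rig\wedge$, $\rig\vee$, $\lef\square$, $\rig\square$, $\rig\Diamond$;
\item the weak-context rules $\rig\supset_2$, $\rig\square_2$, $\lef\Diamond_2$, whose premise keeps only $\Phi$.
\end{itemize}
For these, the set-reduction commutes with the rule.

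\textbf{Main obstacle.} The delicate point is that set-reduction must commute with every rule. One concern is a principal formula that already has a duplicate in the conclusion. Set-reduction is a function on the context, so this causes no trouble, but it should be checked for $\lef\square$, which adds $\lef A$ to a block where $A$ may already be present. Here the reduced premise is either the cumulative premise itself or equal to the conclusion. In the latter case we simply take the proof of the premise. This is legitimate because the induction measures the \lcfik proof, not the \cumlcfik one, and repeating a sequent is harmless.

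If one insists on the direct route announced in the paper, the same argument amounts to this: hp-admissible contraction (Proposition~\ref{prop:admissibility-weakening-contraction}) turns every \lcfik proof into one whose premises are set-reduced. Each rule instance then becomes a \cumlcfik instance, after weakening with the principal formula as above.
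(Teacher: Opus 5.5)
Your proof is correct. It is essentially the argument the paper compresses into ``obtained directly by contraction in \lcfik'': you transfer the \lcfik proof rule by rule, your set-reduction absorbs contraction, and hp-admissible weakening restores the retained principal formula, a step the paper leaves implicit. The only slip is cosmetic. You list $\lef\Diamond_2$ and $\rig\supset_2$ among the rules identical up to set-reduction, but in \cumlcfik the former keeps $\lef{\Diamond A}$ in its premise and the latter is split; your weakening case and your $\rig\supset$ case already handle both correctly.
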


Note that \cumlcfik operates on quasi-set-sequents for which we only assume internal contraction (contraction of formulas within a block) but allow duplicated blocks, hence weakening and external contraction (contraction of blocks) are hp-admissible as in the original \lcfik. 
Therefore, we have 

\begin{proposition}\label{prop:tight-preserving}
    Let $\Phi_1\subseteq\Phi_2$ be two sets of input formulas. 
    The following hold for quasi-set-sequents

    \noindent if \ $\vdash_{\mcumlcfik}^n \Omega,\mblock{\Phi_1},\rig F$ then $\vdash_{\mcumlcfik}^n \Omega,\mblock{\Phi_1},\mblock{\Phi_2}, \rig F$ \qquad  if \ $\vdash_{\mcumlcfik}^n \Omega,\mblock{\Phi_1,\rig F}$ then $\vdash_{\mcumlcfik}^n \Omega,\mblock{\Phi_1},\mblock{\Phi_2,\rig F}$
\end{proposition}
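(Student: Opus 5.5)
The plan is to prove both claims together, by induction on the height $n$ of the given \cumlcfik-derivation. Each claim says that we may add a fresh block $\mblock{\Phi_2}$ with $\Phi_1\subseteq\Phi_2$ to the sequent. In the second claim we also move the output formula $\rig F$ from the block $\mblock{\Phi_1,\rig F}$ into the new block. The idea is that $\mblock{\Phi_2}$ (resp.\ $\mblock{\Phi_2,\rig F}$) behaves like a copy of $\mblock{\Phi_1}$ (resp.\ $\mblock{\Phi_1,\rig F}$) with extra input formulas. So the result combines hp-admissibility of $(w\upblock^+)$, internal weakening $(\lef{w_2})$, and "duplication" of a block. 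These are the structural facts the paper says survive in \cumlcfik for quasi-set-sequents.

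For the first claim, the simplest route is to avoid a fresh induction. Apply hp-admissible $(w\upblock^+)$ to add the block $\mblock{\emptyset}$. Then apply $(\lef{w_2})$ repeatedly to fill it up to $\mblock{\Phi_2}$. Both steps are height-preserving and keep the sequent a quasi-set-sequent: $\Phi_2$ is a set, and duplicated blocks are allowed. This uses only that the weakening lemma holds in \cumlcfik as stated just before the proposition. The hypothesis $\Phi_1\subseteq\Phi_2$ is not even needed here.

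The second claim is the real content and needs a genuine induction on $n$. Let $(r)$ be the last rule of the derivation of $\Omega,\mblock{\Phi_1,\rig F}$.
\begin{itemize}
\item If $(r)$ acts on $\Omega$ or on another block, it cannot touch $\rig F$ except through $(\lef\supset)$ in its left premise. In that premise the output is moved and $\mblock{\Phi_1}$ becomes an input block. We apply the first claim, or weakening, to that premise and the induction hypothesis to the others, then reapply $(r)$.
\item If $(r)$ acts inside $\mblock{\Phi_1,\rig F}$ via a $\bullet_2$-rule or $\rig{}$-rule on $\Phi_1$ or $\rig F$, we instead apply the same rule to the new block $\mblock{\Phi_2,\rig F}$. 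The premises of the block-internal rules only mention the chosen block; the left premise of $(\lef{\supset_2})$ and the conclusions of $(\rig{\supset_2}),(\rig{\square_2}),(\lef{\Diamond_2})$ discard the rest of the context. Concretely, we use the induction hypothesis on the premise, with the extended set $\Phi_1\cup\{\ldots\}\subseteq\Phi_2\cup\{\ldots\}$, and then reapply the rule to $\mblock{\Phi_2,\ldots}$. For rules that discard the context, such as $(\rig{\supset_2})$, we use that the premise $\Phi_1,\lef A,\rig B$ can be weakened to $\Phi_2,\lef A,\rig B$ by $(\lef{w_1})$. The side condition $A\notin\Phi$ versus $A\in\Phi$ may switch between $\Phi_1$ and $\Phi_2$. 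We handle this by noting that the variant with $A\notin\Phi$ becomes the $A\in\Phi$ variant after contracting $\lef A$, which is fine in a set.
\item Rules $(\lef\square)$ and $(\rig\Diamond)$, which propagate into the block, are mimicked on the new block: for $(\rig\Diamond)$ we put $\rig A$ into $\Phi_2$'s block.
\item Axioms $(\id_2)$ and $(\lef{\bot_2})$ transfer because $\Phi_1\subseteq\Phi_2$.
\end{itemize}

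I expect the main obstacle to be the bookkeeping of heights and of the cumulative rules. The $\circ$-rules inside the block, and $(\lef{\supset_2})$, change which block holds the output. One must check that the transformed derivation has height at most $n$ and stays within quasi-set-sequents. In particular, the $(\rig\supset_2)$ split must be reconciled when $A\in\Phi_2\setminus\Phi_1$: there the premise $\Phi_1,\lef A,\rig B$ must be weakened to $\Phi_2,\rig B$ without exceeding the height. I would handle this through hp-admissible contraction, which collapses $\lef A$ into the copy of $A$ already in $\Phi_2$, applied together with weakening.
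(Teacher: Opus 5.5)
\textbf{Verdict: the first claim is fine, but the second has a genuine gap.}

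Your treatment of the first claim is correct and matches the paper. The paper presents the whole proposition as an immediate consequence of height-preserving admissibility of weakening in $\mathsf{SC}^\times_{\mathbf{FIK}}$.

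The gap is the second claim: the induction you sketch does not go through. Consider a last rule that adds input formulas inside the output block, e.g.\ $(\lef{\wedge_2})$, $(\lef{\vee_2})$, the right premise of $(\lef{\supset_2})$, or $(\lef\square)$ propagating into that block. Its premise has the form $\Omega,\mblock{\Phi_1,X,\rig F}$, with $X$ the newly added formulas. So the induction hypothesis can only be instantiated with $\Phi_1\cup X\subseteq\Phi_2\cup X$. It returns $\Omega,\mblock{\Phi_1\cup X},\mblock{\Phi_2\cup X,\rig F}$ at height $n-1$. Reapplying the rule in the second block gives $\Omega,\mblock{\Phi_1\cup X},\mblock{\Phi_2,\rig F}$, not the required $\Omega,\mblock{\Phi_1},\mblock{\Phi_2,\rig F}$. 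The copied input block has inherited $X$, and nothing removes it for free. Removing it needs a second application of the same rule in the copy, giving height $n+1$, so height preservation fails. The induction hypothesis is simply too weak. Strengthening it to allow an arbitrary added input block turns it into the weakening lemma itself.

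The missing observation is that you never need to copy the output block: weaken it in place.
\begin{itemize}
\item From $\Omega,\mblock{\Phi_1,\rig F}$, apply $(\lef{w_2})$ once for each formula of $\Phi_2\setminus\Phi_1$. This gives $\Omega,\mblock{\Phi_2,\rig F}$ at the same height, and it is still a set. This is exactly where $\Phi_1\subseteq\Phi_2$ is used.
\item Then apply $(w\upblock^+)$ to add the input block $\mblock{\Phi_1}$. Blocks are unordered and quasi-set-sequents allow duplicate blocks, so this is $\Omega,\mblock{\Phi_1},\mblock{\Phi_2,\rig F}$ at height $n$.
\end{itemize}

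Two smaller points. First, $(\rig\Diamond)$ can never be the last rule in your second claim, since its conclusion carries the output at top level. Second, the $(\rig{\supset_2})$ side-condition switch needs only weakening, not contraction: when $A\in\Phi_2$ you have $\Phi_1\cup\{A\}\subseteq\Phi_2$.
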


Next, we introduce a more restricted notion of set-sequents called `tight sequent', 
which is required to generate derivations in which each sequent is set-based and at the same time its input parts are non-decreasing, meaning that the $\bullet$-formulas are the same or cumulating along derivations, with respect to the same block. 
This in turn is essential to get the refined complexity bound. 

\begin{definition}[tight sequent and clean derivation]
A \emph{tight} sequent is a set-sequent satisfying the additional condition: if there is a block containing   the output formula, then the input part of that block is either empty or is not 
identical to 
any other block. A derivation (resp. proof) containing only tight sequents is called \emph{clean} derivation (resp. proof). 
\end{definition}

\begin{example}
    $\lef {p\vee q}, \mblock{\lef p, \lef q},\mblock{\lef q}, \rig p$ is a tight sequent while neither (1) $\lef {p\vee q}, \mblock{\lef p, \lef q},\mblock{\lef q, \lef p, \rig p}$ nor (2) $\lef {p\vee q}, \mblock{\lef p, \lef q},\mblock{\lef p,\lef q}, \rig p$ is, since in (1) the input of the block containing output formula is indentical to another block; in (2) there are duplicated blocks. 
\end{example}

\begin{restatable}{propp}{cleanproof}\label{prop:clean-proof}
    If a tight sequent $\Omega,\mblock{\Phi_1},\ldots,\mblock{\Phi_k}$ (possibly $k=0$) is provable in \cumlcfik then 
    there exist $\Psi_1,\ldots,\Psi_l$ such that $\bigcup_{i=1}^l\Psi_i \subseteq \bigcup_{j=1}^k\Phi_j$, $l\leq k$
    and the tight sequent $\Omega,\mblock{\Psi_1},\ldots,\mblock{\Psi_{l}}$ 
    has a clean proof. 
\end{restatable}

\begin{proof}
    Let $\Sigma=\Omega,\mblock{\Phi_1},\ldots,\mblock{\Phi_k}$ and call sequents $\Omega,\mblock{\Psi_1},\ldots,\mblock{\Psi_{l}}$ of the form in the proposition its sub-sequents. 
    Assume $\Sigma$ is provable in \cumlcfik with a proof $\+D$, we show that 
    $\+D$ can be transformed into a clean proof for some subsequent $\Sigma_1$. The idea is, we find the first application which produces non-tight sequents in the premise and then reconstruct the derivation. We do this iteratively until we get a clean proof of some $\Sigma_1$. 
    First if a tight sequent is an initial sequent in \lcfik, then itself is a clean proof. Otherwise, let us consider the first rule application $(r)$ that might produce non-tight sequents. We show the following cases and others are similar. 

    \noindent (1) $r=\lef {\wedge_2}$ or $r=\lef {\vee_2}$, we only show the case of conjunction. 
    Consider an application with premise $\Omega',\mblock{\Phi,\lef{A\wedge B}, \lef A,\lef B}, \mblock{\Phi,\lef{A\wedge B},\lef A,\lef B}$ and conclusion $\Omega',\mblock{\Phi,\lef{A\wedge B}}, \mblock{\Phi,\lef{A\wedge B},\lef A,\lef B}$. 
    From the premise by contraction in \cumlcfik, we obtain $\Omega',\mblock{\Phi,\lef{A\wedge B}, \lef A,\lef B}$ which is a tight sequent. We remove this rule application in $\+D$ and reconstruct $\+D$ as follows: 
    For rule applications above the original premise $\Omega',\mblock{\Phi,\lef{A\wedge B}, \lef A,\lef B}, \mblock{\Phi,\lef{A\wedge B},\lef A,\lef B}$, if it happens only in $\Omega'$, we keep as it is; if it involves any of the two blocks $\mblock{\Phi,\lef{A\wedge B}, \lef A,\lef B}, \mblock{\Phi,\lef{A\wedge B},\lef A,\lef B}$, we apply them only in the single one. 
    For rule applications below the original conclusion $\Omega',\mblock{\Phi,\lef{A\wedge B}}, \mblock{\Phi,\lef{A\wedge B},\lef A,\lef B}$, we remove them if it involves the block $\mblock{\Phi,\lef{A\wedge B}}$ and keep the rest. 

    \noindent (2) $r=\rig {\Diamond}$. Consider an application with premise $\Omega',\mblock{\Phi,\rig A}, \mblock{\Phi,\Theta}$ and conclusion $\Omega',\mblock{\Phi}, \mblock{\Phi,\Theta},\rig{\Diamond A}$. 
    In this case, the premise has a block with output formula but its input part is included in some other block, so it is not tight. 
    Assume w.l.o.g there are no other blocks such that $\Phi,\Theta$ is a subset of it. 
    By Proposition \ref{prop:tight-preserving}, we have $\Omega',\mblock{\Phi}, \mblock{\Phi,\Theta, \rig A}$, 
    By contraction, we have $\Omega',\mblock{\Phi,\Theta, \rig A}$. 
    We replace the application by another one with premise $\Omega',\mblock{\Phi,\Theta, \rig A}$ 
    and conclusion $\Omega', \mblock{\Phi,\Theta},\rig{\Diamond A}$. 
    Similar to the previous case, we reconstruct $\+D$ as follows: 
    For rule applications above the original premise $\Omega',\mblock{\Phi,\rig A}, \mblock{\Phi,\Theta}$, if it happens only in $\Omega'$, we keep as it is; 
    if it involves any of the two blocks $\mblock{\Phi,\rig A}, \mblock{\Phi,\Theta}$, we apply them only in the single one $\mblock{\Phi,\Theta, \rig A}$. 
    For rule applications below the conclusion $\Omega',\mblock{\Phi}, \mblock{\Phi,\Theta},\rig{\Diamond A}$, if it involves the block $\mblock{\Phi}$ we remove it as well as all the occurrences of $\mblock{\Phi}$ and keep the rest. 

    \noindent (3) $r=\lef {\square}$. Consider the following three cases: 
    $$
    \begin{aligned}
        \AXC{$\Omega',\lef {\square A}, \mblock{\Phi,\lef A}, \mblock{\Phi, \lef A,\rig F}$}
        \RightLabel{$\lef \square$}
        \UIC{$\Omega',\lef {\square A}, \mblock{\Phi}, \mblock{\Phi, \lef A,\rig F}$}
        \DP
    & \  &
        \AXC{$\Omega',\lef {\square A}, \mblock{\Phi,\lef A}, \mblock{\Phi, \lef A}$}
        \RightLabel{$\lef \square$}
        \UIC{$\Omega',\lef {\square A}, \mblock{\Phi}, \mblock{\Phi, \lef A}$}
        \DP
    & \ &
        \AXC{$\Omega',\lef {\square A}, \mblock{\Phi,\lef A}, \mblock{\lef A, \Phi, \rig F}$}
        \RightLabel{$\lef \square$}
        \UIC{$\Omega',\lef {\square A}, \mblock{\lef A,\Phi}, \mblock{\Phi, \rig F}$}
        \DP
    \end{aligned}
    $$
    For the first two cases we have the reconstruction as in (2) and (1) respectively. 
    Lastly for (3), 
    first by weakening and contraction we have $\Omega',\lef{\square A},\mblock{\Phi,\lef A,\rig F}$. 
    We replace the application above with 
    another one with premise $\Omega',\lef {\square A}, \mblock{\lef A,\Phi,\rig F}$ 
    and conclusion $\Omega',\lef {\square A}, \mblock{\Phi,\rig F}$ 
    and reconstruct the derivation as: 
    For rule applications above the original premise $\Omega',\lef {\square A}, \mblock{\Phi,\lef A}, \mblock{\lef A, \rig F}$, if it happens only in $\Omega'$, we keep as it is; 
    if it involves any of the two blocks $\mblock{\Phi,\lef A}, \mblock{\lef A, \Phi, \rig F}$, we apply them only in the single one $\mblock{\Phi,\lef A, \rig F}$. 
    For rule applications below the conclusion $\Omega',\lef {\square A}, \mblock{\lef A, \Phi}, \mblock{\Phi, \rig F}$, we remove these applications as well as all the occurrences of $\mblock{\lef A,\Phi}$ if they involves the block $\mblock{\lef A, \Phi}$ and keep the rest. 

    Each time we encounter the situations in (1)-(3), we reconstruct the proof until there are no any `bad' rule applications as above. In the end, we obtain a clean proof of some sub-sequent of $\Sigma$. 
\end{proof}

Now we consider proof search in order to find a \emph{clean proof} for any provable formula. To this end, we define the notion of redundancy in backward proof search. 

\begin{definition}[Redundant rule application]
    Let $(r)$ be a $\bullet$-rule in \cumlcfik. 
    An application $(r)$ to a sequent $\Omega$ on a formula $F$ (and, whenever is involved, on a block $\mblock{\Phi}\in\Omega$) is called \emph{redundant} if it satisfies one of the following conditions in Table \ref{fig:redundant-condition}. 
    \begin{table}[!t]
        \centering
        \renewcommand{\arraystretch}{1.1}
        \setlength{\tabcolsep}{5pt}
        \begin{tabular}{c|c|l}
            $r$ & $F$ & condition of redundancy\\
            \hline
            $\lef \vee_1$ & $\lef{A\vee B}$ & $\lef {A\vee B}\in\fm{\Omega}$, and $\lef A\in \fm{\Omega}$ or $\lef B\in \fm{\Omega}$\\
            $\lef \vee_2$ & $\lef{A\vee B}$ & $\lef{A\vee B}\in\Phi$, and $\lef A\in \Phi$ or $\lef B\in \Phi$\\
            $\lef {\supset_1}$ & $\lef {A\supset B}$ & $\lef {A\supset B}\in\fm{\Omega}$ and $\lef B\in\fm{\Omega}$\\
            $\lef {\supset_2}$ & $ \lef {A\supset B}$ & $\lef {A\supset B}\in\Phi$ and $\lef B\in \Phi$\\
            $\lef {\Diamond_1}$ & $\lef {\Diamond A}$ & $\lef {\Diamond A}\in\fm{\Omega}$ and $\lef A\in\Phi$ for some $[\Phi]\in \Omega$
        \end{tabular}
        \caption{\label{fig:redundant-condition} Conditions of redundancy for $(r)$ on $F$}
    \end{table}
\end{definition}

We adopt a (backward) proof search strategy with the following constraints: 
\begin{itemize}
    \item[(i)] no rule is applied to an axiom; \quad (ii) no rule is applied redundantly;
    \item[(iii)] no identical sequents occurring are allowed in the same branch;
    \item[(iv)] at each step, if the processed sequent is of the form $\Omega,\mblock{\Phi_1}, \mblock{\Phi_2}$ and $\Phi_1\subset \Phi_2$ then rule applications to $\Phi_1$ is forbidden. As a result, for a sequent of the form $\Omega, \mblock{\Phi_1},\ldots,\mblock{\Phi_k}$ with $\Phi_1\subset\ldots\subset\Phi_k$ then any rule acting on blocks \footnote{Namely all the $\bullet_2$-rules (except $\lef {\Diamond_2}$), together with $\rig\Diamond$ and $\lef{\square}$. } is only allowed to be applied to the maximal $\Phi_k$ or to the block containing the output formula. 
\end{itemize}
If the application of a rule produce at least one non-tight premise, the rule cannot be applied. 

By adopting these restriction, we obtain all sequents that can be generated in backward proof search are tight. 
This is the argument: in backward proof search, duplicated blocks can be produced only  in  the following cases: 
(a) by an application of  $\lef{\Diamond_1}$ to $\Omega,\lef{\Diamond A}, \mblock{\lef A}$; 
(b) by an application of  $\lef {\supset}$ to $\Omega, \mblock{\Phi},\mblock{\Phi,\rig F}$ which  removes $\rig F$ from the block;
(c) by an application of  $\bullet$-rules to $\mblock{\Phi}$ in $\Omega,\mblock{\Phi},\mblock{\Phi,\Theta}$ which expands $\Phi$ to $\Phi,\Theta$. 
Note that (a) is prevented by non-redundancy condition of $\lef {\Diamond_1}$ of  constraint (ii)  and (b) cannot occur by definition of tight sequent itself; and (c) is prevented indeed by the constraint (iv) above. 
The completeness of this constrained proof search is ensured by Proposition \ref{prop:clean-proof}.  

Now we turn to the final step, we determine the space requirement of proof search. Preliminarily , we have the following proposition whose proof is by an easy check of each rule. 

\begin{proposition}\label{prop:md-decrease}
    Let $(r)$ be a rule in \cumlcfik which is of the form $\frac{S_1\quad S_2}{S}$ or $\frac{S_1}{S}$. Then we have 
    $$
    \begin{aligned}
        \md{S_1}<\md{S} & \ & \text{if}~r\in\{\lef {\Diamond_2}, \rig{\supset_2},\rig{\square_2}\}; & \quad &
        \md{S_i}\leq \md{S} \ \text{for~}i\in \{1,2\} & \ & \text{otherwise.}
    \end{aligned}
    $$
\end{proposition}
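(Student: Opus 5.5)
The proof is a direct rule-by-rule check against the definition $\md{\Omega}=\max\{\md{\Phi},\md{\Psi_1}+1,\ldots,\md{\Psi_l}+1\}$. First I record two monotonicity facts. Adding input formulas whose modal degree is at most that of a formula already present in the same component does not raise the maximum. Moving a formula $C$ from inside a block to the top level changes its contribution from $\md{C}+1$ to $\md{C}$, which cannot increase anything. Then I go through the rules of \cumlcfik in three groups.

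\emph{Propositional rules at either level}, namely $\lef\wedge,\lef\vee,\lef\supset,\rig\wedge,\rig\vee,\rig{\supset_1}$ and $\rig{\supset_2}$. In each premise the principal formula is kept or replaced, inside the same component, by its immediate subformulas. Those subformulas have modal degree at most that of the principal formula. Contexts are kept, or reduced by the $*$-operation, which only deletes material. So $\md{S_i}\le\md{S}$ in every case.

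\emph{Modal rules that stay within the same world-pair}, namely $\lef\square$, $\rig{\square_1}$, $\lef{\Diamond_1}$ and $\rig\Diamond$.
\begin{itemize}
\item For $\lef\square$, the formula $A$ enters a block and contributes $\md{A}+1=\md{\square A}$, which is already counted at the top level.
\item For $\rig{\square_1}$ and $\lef{\Diamond_1}$, the new block $[A]$ contributes $\md{A}+1$, equal to the degree of the top-level principal formula.
\item For $\rig\Diamond$, $A$ in a block contributes $\md{A}+1=\md{\Diamond A}$.
\end{itemize}
Hence equality or a decrease holds in all four cases.

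\emph{The three block-leaving rules}, $\lef{\Diamond_2}$, $\rig{\supset_2}$ and $\rig{\square_2}$, are the only ones where the premise discards the outer context $\Omega$ and promotes the block content $\Phi$ to the top level. This is the step needing care, since I must show a strict decrease. Write $d=\md{S}$. Every formula $C$ in $\Phi$ satisfies $\md{C}+1\le d$, so at the top level of the premise it contributes at most $d-1$. The new material behaves the same way:
\begin{itemize}
\item For $\rig{\supset_2}$, $\lef A$ and $\rig B$ have degree at most $\md{A\supset B}\le d-1$.
\item For $\lef{\Diamond_2}$, the new block $[\lef A]$ contributes $\md{A}+1=\md{\Diamond A}\le d-1$. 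The possible $\rig\bot$ added by ${}^+$ has degree $0$, and the repeated $\lef{\Diamond A}$ contributes $\md{\Diamond A}\le d-1$.
\item For $\rig{\square_2}$, the block $[\rig A]$ contributes $\md{\square A}\le d-1$.
\end{itemize}
All other blocks of $S$ are dropped. Hence $\md{S_1}\le d-1<d$.

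A minor subtlety is that $\md{\cdot}$ of an empty component should be taken as $0$ or $-\infty$. This does not affect the argument, except in the degenerate case where $S$ has degree $0$. That case cannot arise for these three rules, because the principal formula sits inside a block, so $d\ge 1$.
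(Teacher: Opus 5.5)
Your rule-by-rule check is correct and follows the paper's approach: it is exactly the ``easy check of each rule'' the paper invokes without writing out, and you handle the cumulative $\lef{\Diamond_2}$ properly (the repeated $\lef{\Diamond A}$ and the $\rig\bot$ added by ${}^+$). One small slip: you also list $\rig{\supset_2}$ in your first group, whose description (same component, context kept) does not fit a rule that drops $\Omega$ and moves $\Phi$ to the top level; this is harmless, since your third group treats $\rig{\supset_2}$ correctly and gives the required strict decrease.
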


\begin{restatable}{lemm}{termination}\label{lem:terminating}
    Backward proof search in \cumlcfik for a formula $A$ terminates with a finite clean derivation/proof where each branch is of exponential length in terms of $|A|$. 
\end{restatable}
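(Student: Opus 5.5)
The plan is to bound the length of any branch generated by the constrained backward proof search (constraints (i)--(iv), redundancy, tightness) by $2^{\+O(n)}$, where $n=|A|$. Each branch is split into \emph{segments}: maximal stretches between consecutive applications of the rules $\lef{\Diamond_2},\rig{\supset_2},\rig{\square_2}$, i.e.\ the rules that discard the top level and promote a block to the top. By Proposition \ref{prop:md-decrease} these are the only rules that strictly decrease the modal degree. This alone does not bound the number of segments, since $\rig{\supset_2}$ and $\lef{\Diamond_2}$ can be applied repeatedly at the same degree. I will therefore count segments by the top-level formula set: by constraint (iii) and the cumulative form of \cumlcfik, identical sequents cannot repeat. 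The main work is to bound the length of a single segment, and then to bound how many segments can follow one another.

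Within a segment, the top-level input set $\fm{\Omega}$ only grows, because all $\bullet_1$-rules are cumulative and $\rig\supset_1$ only adds. So it changes at most $|\subfm{A}|=\+O(n)$ times. Between two changes of the top level, the rules act on blocks or on the output. For blocks, constraint (iv) together with tightness is the key point: rules on blocks are applied only to the $\subseteq$-maximal blocks or to the block containing the output. Each block's input grows monotonically (cumulativity), and a block is created only by $\lef{\Diamond_1}$, once per $\lef{\Diamond A}$ by the redundancy condition, or by $\rig\square_1$. Since every block is a subset of $\subfm{A}$ and the blocks along a segment are pairwise distinct sets (tightness forbids duplicates), the number of block-configurations visited is bounded by counting monotone growth steps. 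Each active block can grow at most $\+O(n)$ times, and the number of blocks is at most $\+O(n)$ from the $\lef{\Diamond_1}$ and $\rig\square_1$ creations. Applications of $\lef\square$ add $\lef A$ to a block, again a growth step.

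The output formula changes via $\circ$-rules, each of which passes to a proper subformula, or via $\lef\supset$ left premises, which reset the output to some $\rig B$ with $B\in\subfm{A}$ while requiring growth elsewhere. Redundancy of $\lef\supset$ (with $\lef B$ already present) guarantees that each $\lef\supset$ application is matched by an input growth. I therefore expect a segment length bounded by (number of monotone input growth steps) $\times$ (length of an output-decomposition chain), roughly polynomial, at worst $2^{\+O(n)}$ when one counts pairs (top-level set, output). Across segments, a promoted block's input is a subset of $\subfm{A}$, and a new segment starts from a set $\Phi^+$ or $\Phi,\lef{A'}$. Because of (iii) and monotonicity, a branch cannot revisit a top-level state (set plus output), which gives at most $2^{\+O(n)}\cdot\+O(n)$ segments. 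Multiplying these bounds yields branch length $2^{\+O(n)}$, and termination follows since every branch is finite and the search is finitely branching.

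I expect the main obstacle to be the non-monotonic steps. Within a segment these are the $\lef\supset$ left premises, which reset the output and apply the $*$-operator to the other components. Across segments they are the promotions, which throw away the whole top level, so the simple monotonicity argument breaks. To handle this, I would first define a potential consisting of the tuple (top-level input set, multiset of maximal block inputs, output formula). I would then show that every rule application either strictly increases the input part in the componentwise inclusion order or strictly decreases the output size while keeping the inputs fixed. The number of distinct potentials reachable with inputs as sets of $\subfm{A}$ and $\+O(n)$ blocks is what must be kept at $2^{\+O(n)}$. The delicate point is arguing that tight sequents along a branch have only polynomially many blocks that matter, rather than $2^{\+O(n)}$, and that rules act only on maximal blocks. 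This needs constraint (iv), and it is where I would spend most care.
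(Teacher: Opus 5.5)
Your split of a branch into segments at $\lef{\Diamond_2},\rig{\supset_2},\rig{\square_2}$ matches the paper's. However, your bound on the length of a single segment, which is the real content of the lemma, does not go through as you set it up.

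\textbf{The potential fails on $\lef\supset$ left premises.} Follow the left premise of $\lef{\supset_1}$ (or $\lef{\supset_2}$). Its input part is exactly that of the conclusion: the $*$-operator only deletes the output, and $\lef{A\supset B}$ is kept. Meanwhile the output is replaced by $\rig A$ for an arbitrary antecedent $A$, which may be larger than the old output. So this step neither enlarges the input nor shrinks the output. For the same reason, your claim that redundancy pairs every $\lef\supset$ application with an input growth holds only along the right premise.

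\textbf{The $\mathcal{O}(n)$ bound on blocks per segment is unjustified.} $\rig{\square_1}$ has no redundancy condition. Suppose some $\lef{\square B\supset E}$ is at top level. After the $*$ in such a left premise pulls the output out of a block, the output can become $\rig{\square B}$ again, and $\rig{\square_1}$ creates a fresh block. That block may then grow into a different subset of $\subfm{A}$. In general only the $2^{\mathcal{O}(n)}$ bound on blocks coming from set-sequents is available. Counting pairs (top-level set, output) ignores the blocks altogether, so it cannot bound a segment either.

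\textbf{The missing step.} Use constraint (iii) inside each segment, not only across segments; you need neither a decreasing potential nor a polynomial bound on blocks. Within a segment the input part never decreases. A tight sequent has at most $2^{\mathcal{O}(n)}$ blocks, each a subset of $\subfm{A}$, so the input parts met along a segment form a chain of length $2^{\mathcal{O}(n)}$. For a fixed input part, a sequent is determined by the output formula and its position, giving $\mathcal{O}(n)\cdot 2^{\mathcal{O}(n)}$ possibilities, and by (iii) none of these repeats. Hence a segment has length $2^{\mathcal{O}(n)}$, which is exactly the paper's count.

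\textbf{Number of segments.} Your remark that $\rig{\supset_2}$ and $\lef{\Diamond_2}$ can be applied repeatedly at the same degree misreads Proposition~\ref{prop:md-decrease}. The modal degree of a sequent counts block formulas with $+1$, so each promoting rule strictly lowers it and no other rule raises it. This gives at most $\md{A}+1$ segments directly. Your (iii)-based count of $2^{\mathcal{O}(n)}$ segments would also suffice, so this error is harmless.
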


\begin{proof}
    Let $A$ be a formula of size $n$, $\+D$ be a clean derivation of the tight sequent $\rig A$ and $\+B$ be an arbitrary branch in $\+D$. 
    We divide $\+B$ into phases $\+B_1,\ldots,\+B_k,\ldots$ where each $\+B_i=S_i^1,\ldots S_i^{m_i}$ such that $S_1^1= \rig A$ and each $S_{i+1}^1$ is obtained from $S_i^{m_i}$ by one of the rules $(\lef {\Diamond_2}), (\rig{\supset_2})$ and $(\rig{\square_2})$. This means the maximal index $k$ equals to the maximal number of applications of $(\lef {\Diamond_2}), (\rig{\supset_2})$ and $(\rig{\square_2})$ that are applied within $\+B$. 
    By Proposition \ref{prop:md-decrease}, $k\leq\md{A}$, hence 
    we can write $\+B=\+B_1,\ldots,\+B_k$ for some $k\leq\md{A}$. 
    Moreover, we claim 
        for any $i\in\{1,\ldots,k\}$, $m_i$ as the length of $\+B_i$ is bounded by $2^{O(n)}$. 
    
    \noindent \underline{Proof of claim:}
        $m_i$ as the length of $\+B_i$ is identical to the number of sequents that are produced by backward proof search. 
        Since $\+B_i$ does not contain 
        applications of 
        $(\lef {\Diamond_2}), (\rig{\supset_2})$ and $(\rig{\square_2})$, by definition, any backward rule application of other rules in \cumlcfik does not decrease the size of the input part. 
        Since each tight sequent is also a set-sequent, by Proposition \ref{prop:max-block}, there are at most $\+O(2^n)$ blocks contained in a tight sequent. Also, each block has size $\+O(n^2)$, as the input part is increasing within $\+B_i$, hence the number of different input parts for sequents in $\+B_i$ is $\+O(n^2\cdot 2^n)$. On the other hand, while keeping/increasing the set of input, the output formula might be replaced by $\lef \supset$ or decomposed by other $\circ$-rules. 
        The unique output formula can occur in any block and since there are $\+O(n)$ output formulas and $2^{\+O(n)}$ blocks, there are $2^{\+O(n)}\cdot \+O(n)$ possible occurrences of the output formula which give $\+O(2^n\cdot n^2)\cdot \+O(n\cdot 2^n)=2^{\+O(n)}$ different sequents, 
        which is the upper bound of $m_i$. 
        \hfill $\dashv$

    To estimate the length of each branch in $\+D$, since $\+B=\+B_1,\ldots,\+B_k$ and $k\leq \md{A}=\+O(n)$, by the claim above, 
    the length of $\+B$ is bounded by $2^{\+O(n)} \cdot \+O(n) = 2^{\+O(n)}$. Therefore, we conclude $\+B$ is of exponential length in terms of $n$. 
\end{proof}

By Lemma \ref{lem:terminating} and Proposition \ref{prop:max-block}, each branch in a derivation of $A$ has an exponential size of $|A|$. 
\begin{theorem}
    The decision problem of $\mfik$ is in \textsc{Expspace}. 
\end{theorem}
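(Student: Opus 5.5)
To show that deciding \fik lies in \textsc{Expspace}, I would run a depth-first backward proof search for the tight sequent $\rig A$ in \cumlcfik under constraints (i)--(iv), and check that it is correct and uses exponential space.

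\textbf{Correctness.} Soundness comes from Theorem~\ref{thm:soundness}. Completeness is a chain of earlier results:
\begin{enumerate}
\item a valid $A$ is provable in $\hfik$;
\item by completeness of \lcfik, $\rig A$ is provable in \lcfik;
\item since $\rig A$ is a quasi-set-sequent, it is provable in \cumlcfik;
\item since $\rig A$ is tight and has no blocks ($k=0$ in Proposition~\ref{prop:clean-proof}), it has a clean proof;
\item such a clean proof can be found within the constrained search, because non-redundancy and constraint (iv) only prune applications that would create non-tight premises or repeat information already present.
\end{enumerate}
For step 5 I would reuse the argument already given after the redundancy table.

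\textbf{Complexity.} I would implement the search as an alternating procedure, or equivalently as a deterministic depth-first exploration. The procedure keeps on the stack only the current branch, and at each node a pointer recording which rule and which premise is currently being explored. At an $\wedge$-type node (two premises) both premises must succeed. At an $\vee$-type node, i.e.\ a choice among applicable rules, one choice must succeed. Nodes are explored one at a time, and the branch is popped on return.

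By Lemma~\ref{lem:terminating}, every branch has length $2^{\+O(n)}$. By Proposition~\ref{prop:max-block}, each set-sequent on it has size $2^{\+O(n)}$, and tight sequents are set-sequents. The bookkeeping pointer per node is polynomial, since the number of candidate rule applications to a sequent is bounded by its size. So the stack uses $2^{\+O(n)}\cdot 2^{\+O(n)}=2^{\+O(n)}$ space.

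Loop-checking under constraint (iii) only compares the current sequent with those already on the stack, so it needs no extra space. Each sequent is stored explicitly in exponential space, and all local operations run in space polynomial in its size:
\begin{itemize}
\item testing tightness;
\item testing redundancy;
\item computing $\dow\Omega$ and $\Phi^+$;
\item checking the inclusion condition of constraint (iv).
\end{itemize}

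\textbf{Main obstacle.} The delicate point is that depth-first search with loop-checking must still be complete, i.e.\ that cutting a branch when a sequent repeats does not lose proofs. I would handle this with the standard argument: a shortest clean proof never repeats a sequent on a branch, since any repetition could be cut out. Therefore Lemma~\ref{lem:terminating} bounds the depth that must be explored. I would also note that proof search with invertible rules applied eagerly does not affect this, since Lemma~\ref{inver-lemma} gives height-preserving invertibility. Putting these together, the procedure is an exponential-space decision procedure, and \fik is in \textsc{Expspace}.
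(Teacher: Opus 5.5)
Your proposal is correct and follows essentially the paper's route. Completeness of the constrained search in $\mathsf{SC}_\mathbf{FIK}^\times$ rests on the chain ending in Proposition~\ref{prop:clean-proof}, and the \textsc{Expspace} bound comes from combining the exponential branch length of Lemma~\ref{lem:terminating} with the exponential size of set-sequents from Proposition~\ref{prop:max-block}, storing one branch at a time; your explicit depth-first and loop-checking bookkeeping spells out what the paper's one-line justification leaves implicit.
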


\section{Conclusion}

We have presented a `shallow' sequent calculus for intuitionistic modal logic \fik, which is a close variant or weakening of Fischer Servi/Simpson's \ik, regarded a  natural and meaningful \iml. 
By means of the shallow calculus, we have shown that the decision problem of \fik is in \textsc{Expspace}, much lower than the upper bound conjectured for \ik. 
However, whether this bound is tight, that is to say what is the lower bound for the decision problem of \fik, is an open problem at present. 

Form a proof-theoretic viewpoint, our `shallow' sequent calculi can be seen either as a minimal extension of standard Gentzen calculi, or as a drastic simplification of nested calculi. We believe that calculi of this type might be useful for studying certain meta-logical properties such as different kinds of interpolation. 
Moreover, we aim to develop similar calculi for other logics of the \iml family in order to get new or better complexity bounds. 

\nocite{*}
\bibliographystyle{eptcs}
\bibliography{refs}

\end{document}